\newtheorem{te}{Theorem}
\newtheorem{de}{Definition}
\providecommand{\keywords}[1]
{\small	\textbf{\textit{Keywords:}} #1}
\author{R. Azuaje and
A. M. Escobar-Ruiz\\
Departamento de F\'{i}sica, Universidad Aut\'onoma Metropolitana Unidad Iztapalapa\\ San Rafael Atlixco 186, 09340 Cd. Mx., M\'exico}
\title{Canonical and canonoid transformations for Hamiltonian systems on (co)symplectic and (co)contact manifolds}
\begin{document}
\maketitle

\begin{abstract}

In this paper we present canonical and canonoid transformations considered as global geometrical objects for Hamiltonian systems. Under the mathematical formalisms of symplectic, cosymplectic, contact and cocontact geometry, the canonoid transformations are defined for (co)symplectic, (co)contact Hamiltonian systems, respectively. The local characterizations of these transformations is derived explicitly and it is demonstrated that for a given canonoid transformation there exist constants of motion associated with it. 

\keywords{canonoid transformations; cosymplectic geometry; contact geometry; cocontact geometry; constants of motion; integrability.}

\end{abstract}

\section{Introduction}
\label{sec1}

In classical mechanics the theory of coordinate transformations is fundamental and appears practically in any textbook\cite{Landau}. In particular, within the Hamiltonian symplectic mechanics the well known canonical transformations play an important role; if for a given Hamiltonian system one can find a canonical transformation such that half of the new coordinates are constants of motion, then the integration of the corresponding Hamilton's equations of motion becomes trivial. Another larger class of transformations called canonoid transformations, which include the canonical ones as a special case, have been studied in relation with the existence of constants of motion (see \cite{Ne,Ca,G.F4,TCA}). It is well known that if a time-independent Hamiltonian system  with $n$ degrees of freedom (the phase space is of dimension $2n$) has $n$ algebraically independent well-defined constants of motion in involution then the system is integrable by quadratures (see Liouville theorem in \cite{Arnold,Babelon,MC}). There exists an analogous statement for time-dependent Hamiltonian systems (see \cite{Gia,LTorres}), namely that for a time-dependent Hamiltonian system with $n$ degrees of freedom (the phase space is now $(2n+1)-$dimensional) the knowledge of $n$ independent integrals of motion in involution allows us to solve the Hamilton's equations by quadratures. There is also a notion of integrability for contact Hamiltonian systems (see \cite{Boyer,Visi}) where constants of motion are involved as well, this is detailed in section \ref{sec7}. It is worth mentioning that even in the simplest symplectic case with $n=2$ the classification of classical and quantum superintegrable Hamiltonian systems (see \cite{AMGen, AMJVPW2015, EWY, AMJVPWIY2018} and references therein) is still an open question.

The goal of the present paper is twofold. Firstly, for Hamiltonian systems we aim to present (under the formalisms of symplectic, cosymplectic, contact and cocontact geometry) the canonical and canonoid transformations as global diffeomorphisms between the phase space to itself intimately connected to the underlying geometric structures. Secondly, it is shown that the existence of a canonoid transformation allows us to find constants of motion which, as has been pointed out before, is fundamental for integrability. It is worth mentioning some differences between our approach and those presented in previous works, for example in \cite{Ne} only the time-independent case is considered whereas in \cite{Ca,G.F4,TCA} the time-dependent case is analyzed but the geometric approach is based on the product structure $\mathbb{R}\times M$ with $M$ a symplectic manifold and the constructions are local only. To the best of the knowledge of the present authors, in this work the notion of canonoid transformations for contact and cocontact Hamiltonian systems is introduced for the first time. We also incorporate to our description in the symplectic and cosymplectic frameworks (Poisson structures) the integrability criterion based on the existence of a Nijenhuis tensor on the phase space. More specifically, given a canonoid transformation for a Hamiltonian system we define a mixed tensor field on the phase space whose traces of its powers are constants of motion. Additionally, for the symplectic and cosymplectic cases, if such a tensor field is a Nijenhuis tensor then the associated constants are pairwise in involution. 

\clearpage

It is worth remarking that the type of Hamiltonian system we deal with is determined by the geometric structure of the corresponding phase space as displayed in the following Table:

\begin{center}
\begin{tabular}{|c|c|}
\hline 
Symplectic structures & Time-independent Hamiltonian systems \\ 
\hline 
Cosymplectic structures & Time-dependent Hamiltonian systems \\ 
\hline 
Contact structures & Contact Hamiltonian systems (time-independent dissipative systems)\\ 
\hline 
Cocontact structures & Cocontact Hamiltonian systems (time-dependent dissipative systems)\\ 
\hline 
\end{tabular} 
\end{center}

This paper is organized as follows: in section \ref{sec2} we briefly review the most important aspects (for our purposes) of the formalism of time-independent Hamiltonian systems on symplectic manifolds. In section \ref{sec3} we present from a geometric perspective the notions of canonical and canonoid transformations for time-independent Hamiltonian systems; here the notion of bi-Hamiltonian systems naturally appears as a closely related concept to our approach, so we introduce some objects and techniques found in this subject as well. A review of cosymplectic geometry and the corresponding formulation of time-dependent Hamiltonian systems is given in section \ref{sec4}. In section \ref{sec5}, the notions of canonical and canonoid transformations for time-dependent Hamiltonian systems are introduced within the framework of cosymplectic geometry; we follow the ideas developed in section \ref{sec3} and despite that the geometric structure differ, analogous results will be obtained. We continue with section \ref{sec6} where we survey the theory of contact geometry and contact Hamiltonian systems; in section \ref{sec7} we introduce the concepts of canonical and canonoid transformations for contact Hamiltonian systems. The last theoretical review is given in  section \ref{sec8}, where we describe the formalism of cocontact Hamiltonian systems. Finally in section \ref{sec9} we introduce the notion of canonoid transformations for cocontact Hamiltonian systems and derive analogous results to those announced for the contact case.

\section{Symplectic geometry and time-independent Hamiltonian systems}
\label{sec2}

In this section, in order to set up the notation and language used in this paper, we present a brief review of symplectic geometry and the formulation of time-independent Hamiltonian mechanics (for details see \cite{AM,LR,G.F2,lee}).
 
Let $(M,\omega)$ be a symplectic manifold of dimension $2n$. Around any point $p\in M$ there exist local coordinates $(q^{1},\cdots,q^{n},p_{1},\cdots,p_{n})$, called canonical coordinates or Darboux coordinates, such that
\begin{equation}
\omega=dq^{i}\wedge dp_{i}.
\end{equation}
In this paper we adopt the Einstein summation convention ( i.e., a summation over repeated indices is assumed).

For each $f\in C^{\infty}(M)$ is assigned a vector field $X_{f}$ on $M$, called the Hamiltonian vector field for $f$, according to
\begin{equation}
X_{f}\lrcorner \omega \ = \ df \ .
\end{equation}
In canonical coordinates, $X_{f}$ reads
\begin{equation}
X_{f}=\frac{\partial f}{\partial p_{i}}\frac{\partial}{\partial q^{i}}-\frac{\partial f}{\partial q^{i}}\frac{\partial}{\partial p_{i}}\ .
\end{equation}
The assignment $f\longmapsto X_{f}$ is linear, that is
\begin{equation}
X_{f+\alpha g}\ = \ X_{f}+\alpha X_{g}\ ,
\end{equation}
$\forall f,g\in C^{\infty}(M)$ and $\forall \alpha \in\mathbb{R}$. Given $f,g \in C^{\infty}(M)$ the Poisson bracket of $f$ and $g$ is defined by
\begin{equation}
\lbrace f,g\rbrace=X_{g}(f)=\omega(X_{f},X_{g}).
\end{equation}
In canonical coordinates we have
\begin{equation}
\lbrace f,g\rbrace \ = \ \frac{\partial f}{\partial q^{i}}\frac{\partial g}{\partial p_{i}}\,-\,\frac{\partial f}{\partial p_{i}}\frac{\partial g}{\partial q^{i}} \ .
\end{equation}

The theory of time-independent Hamiltonian systems is naturally constructed within the mathematical formalism of symplectic geometry. Given $H\in C^{\infty}(M)$ the dynamics of the Hamiltonian system on $(M,\omega)$ (the phase space) with Hamiltonian function $H$ is defined by the Hamiltonian vector field $X_{H}$, that is, the trajectories of the system $\varphi(t)=(q^{1}(t),\cdots,q^{n}(t),p_{1}(t),\cdots,p_{n}(t))$ are the integral curves of $X_{H}$, they satisfy the Hamilton's equations of motion
\begin{equation}
\dot{q^{i}} =\frac{\partial H}{\partial p_{i}}, \hspace{1cm}
\dot{p_{i}} =-\frac{\partial H}{\partial q^{i}}\qquad ;\qquad i=1,2,3,\ldots,n \ .
\end{equation}

The evolution (the temporal evolution) of a function $f\in C^{\infty}(M)$ (a physical observable) along the trajectories of the system is given by
\begin{equation}
\dot{f}= \mathcal{L}_{X_{H}}f=X_{H}f=\lbrace f,H\rbrace \ ,
\end{equation}
where $L_{X_{H}}f$ is the Lie derivative of $f$ with respect to $X_{H}$. We say that $f$ is a constant of motion of the system if it is constant along the trajectories of the system, that is, $f$ is a constant of motion if $L_{X_{H}}f=0$ ($\lbrace f,H\rbrace=0$). As mentioned in the Introduction, the existence of a constant of motion effectively reduces the dimension of the phase space and the integration of the equations of motion becomes simpler.

\section{Canonical and canonoid transformations for time-independent Hamiltonian systems}
\label{sec3}

The concept of canonical transformations for time-independent Hamiltonian systems is well known, on the other hand, the concept of canonoid transformation is not so popular. In this section we present the latter concept from a geometric perspective. This section establishes a guide to introduce canonoid transformations in cosymplectic as well as in contact Hamiltonian mechanics. 

Let us consider a symplectic manifold $(M,\omega)$ of dimension $2n$.
\begin{de}
We say that a diffeomorphism $F:M\longrightarrow M$ is a canonical transformation if $F^{*}\omega=\omega$.
\end{de} 
Let $F:M\longrightarrow M$ be a diffeomorphism. We know that $F^{*}\omega$ is a symplectic structure on $M$, let us denote it by $\overline{\omega}$ and by $\overline{\lbrace,\rbrace}$ the Poisson bracket defined by it. Around any point $p\in M$ there are local coordinates $(Q^{1},\cdots,Q^{n},P_{1},\cdots,P_{n})$ such that
\begin{equation}
\overline{\omega}=dQ^{i}\wedge dP_{i}.
\end{equation} 
Given $f\in C^{\infty}(M)$, let us denote by $\overline{X}_{f}$ the Hamiltonian vector field for $f$ defined by $\overline{\omega}$, then $\overline{\lbrace f,g\rbrace}=\overline{X}_{g}f$. The following propositions are equivalent
\begin{enumerate}
\item $F$ is a canonical transformation,
\item $\overline{X}_{f}=X_{f}$ $\forall f \in C^{\infty}(M)$ and
\item $\overline{\lbrace f,g\rbrace}=\lbrace f,g\rbrace$ $\forall f,g \in C^{\infty}(M)$.
\end{enumerate}
So, a diffeomorphism from a symplectic manifold to itself is a canonical transformation if and only if it preserves the Poisson bracket. 

If we think of $F$ locally as a coordinate transformation $(q^{1},\cdots,q^{n},p_{1},\cdots,p_{n})\longmapsto (Q^{1},\cdots,Q^{n},P_{1},\cdots,P_{n})$  then it is a canonical transformation if and only if
\begin{equation}\label{eqcpb}
\lbrace Q^{i},Q^{j}\rbrace=\lbrace P_{i},P_{j}\rbrace=0 \hspace{1cm}\textit{and}\hspace{1cm}\lbrace Q^{i},P_{j}\rbrace=\delta^{i}_{j}.
\end{equation}
Classically (see \cite{MC}), canonical transformations have been defined as changes of canonical coordinates, i.e, coordinate transformations from canonical coordinates such that the new coordinates satisfy equations (\ref{eqcpb}) (canonical Poisson brackets).

Observe that up to this point in this section we have not involved any Hamiltonian system. Let us consider the Hamiltonian system $(M,\omega,H)$; if $F$ is a canonical transformation then the equations of motion in the coordinates $(Q^{1},\cdots,Q^{n},P_{1},\cdots,P_{n})$ are
\begin{equation}
\begin{split}
\dot{Q^{i}} &= \lbrace Q^{i},H\rbrace=\overline{\lbrace Q^{i},H\rbrace}=\frac{\partial H}{\partial P_{i}},\\
\dot{P_{i}} &=\lbrace P_{i},H\rbrace=\overline{\lbrace P_{i},H\rbrace}=-\frac{\partial H}{\partial Q^{i}}\ ,
\end{split}
\end{equation}
hence, the canonical transformations preserve the Hamiltonian form of the equations of motion. There are transformations for Hamiltonian systems that preserve the Hamiltonian form of equations of motion and they are not necessarily canonical transformations, we called canonoid transformations these more general transformations.

\begin{de}
We say that the diffeomorphism $F:M\longrightarrow M$ is a canonoid transformation for the Hamiltonian system $(M,\omega,H)$ if there exists a function $K\in C^{\infty}(M)$ such that 
\begin{equation}
X_{H}\lrcorner \overline{\omega}=dK.
\end{equation}
\end{de}
Observe that every canonical transformation is also a canonoid transformation.

For the rest of this section let us suppose that $F$ is a canonoid transformation. Given $f\in C^{\infty}(M)$ we have that
\begin{equation}
\dot{f}=\lbrace f,H\rbrace=\overline{\lbrace f,K\rbrace}.
\end{equation}
In general, canonoid transformations do not preserve the Poisson bracket, so that if we think of $F$ locally as a coordinate transformation $(q^{1},\cdots,q^{n},p_{1},\cdots,p_{n})\longmapsto (Q^{1},\cdots,Q^{n},P_{1},\cdots,P_{n})$  then the new coordinates do not necessarily satisfy equations (\ref{eqcpb}). The equations of motion in the coordinates $(Q^{1},\cdots,Q^{n},P_{1},\cdots,P_{n})$ read
\begin{equation}
\begin{split}
\dot{Q^{i}} &= \lbrace Q^{i},H\rbrace=\overline{\lbrace Q^{i},K\rbrace}=\frac{\partial K}{\partial P_{i}},\\
\dot{P_{i}} &=\lbrace P_{i},H\rbrace=\overline{\lbrace P_{i},K\rbrace}=-\frac{\partial K}{\partial Q^{i}}.
\end{split}
\end{equation}
Therefore, the canonoid transformations preserve the Hamiltonian form of the equations of motion. We can observe that reciprocally, transformations that preserve the Hamiltonian form of the equations of motion are canonoid. Canonoid transformations have been defined as coordinate transformations preserving the Hamiltonian form of the equations of motion (see \cite{G.F4}). There exists a relationship between $H$ and $K$ (see \cite{G.F4}), a straightforward computation gives
\begin{equation}
\frac{\partial K}{\partial p_{l}} = [p_{l}, p_{j}] \frac{\partial H}{\partial q^{j}} - [p_{l}, q^{j}] \frac{\partial H}{\partial p_{j}}\qquad {\rm and} \qquad \frac{\partial K}{\partial q^{l}} = [q^{l}, p_{j}] \frac{\partial H}{\partial q^{j}} - [q^{l}, q^{j}] \frac{\partial H}{\partial p_{j}}.
\end{equation}

In this case we have two Hamiltonian systems $(M,\omega,H)$ and $(M,\overline{\omega},K)$ with the same phase space and the dynamics defined by the same Hamiltonian vector field $X_{H}$ (or two Hamiltonian representations for the same mechanical system). The symplectic structures $\omega$ and $\overline{\omega}$ are invariant under the flow of $X_{H}$, indeed
\begin{equation}
L_{X_{H}}\omega = X_{H}\lrcorner d\omega+d(X_{H}\lrcorner \omega)=d(X_{H}\lrcorner \omega)=d(dH)=0
\end{equation}
and analogously we see that $L_{X_{H}}\overline{\omega}=0$. Thus, we arrive to one condition for a bi-Hamiltonian system. The concept of bi-Hamiltonian system was introduced by F. Magry and C. Morosi \cite{MM} as follows: a (time-independent) Hamiltonian system with symplectic structure $\gamma$ is called bi-Hamiltonian if there exists a second symplectic structure $\overline{\gamma}$ invariant under the flow of the corresponding Hamiltonian vector field, such that the (1,1)-tensor field $S$ defined by $\overline{\gamma}=S\lrcorner \gamma$ is a Nijenhuis tensor ($(S\lrcorner\gamma)(X,Y)=\gamma(SX,Y)$). A Nijenhuis tensor is a (1,1)-tensor field whose Nijenhuis torsion tensor vanishes; the Nijenhuis torsion tensor of a (1,1)-tensor field $L$ is defined by
\begin{equation}
T(L)(X,Y) = [LX, LY] - L[LX,Y] - L[X, LY] + L^{2}[X,Y] \ ,
\end{equation}
for any vector fields $X,Y$. A symplectic manifold equipped with a Nijenhuis tensor is called a symplectic-Nijenhuis manifold (or a $\omega N$ manifold) (for details see \cite{MM,KM,FP}). Given a bi-Hamiltonian system with Nijenhuis tensor $S$, the traces of the powers of $S$ (also the eigenvalues) are constants of motion in involution (see \cite{MM,RB}).  

Following the work of F. Magry and C. Morosi on bi-Hamiltonian systems, we can define an (1,1)-tensor field $S$ on $M$ by
\begin{equation}\label{eqS}
\overline{\omega}=S\lrcorner \omega.
\end{equation}
This relation (\ref{eqS}) allows us to obtain 
an analogous result to that aforementioned for bi-Hamiltonian systems.

\begin{te}\label{te1}
If $F:M\longrightarrow M$ is a canonoid transformation for the Hamiltonian system $(M,\omega,H)$, then the traces of the powers of the (1,1)-tensor field $S$ defined by equation (\ref{eqS}) are constants of motion.
\end{te}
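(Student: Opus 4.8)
The plan is to show that the tensor field $S$ is itself invariant under the flow of $X_H$, i.e. $L_{X_H}S=0$, and then to deduce that each $\operatorname{tr}(S^k)$ is constant along the flow because the operation of taking traces of powers commutes with the Lie derivative. So the whole argument reduces to one Leibniz-type computation plus the nondegeneracy of $\omega$.

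First I would record the two invariance facts already established just above the statement: $L_{X_H}\omega=0$, and, since $F$ is canonoid, $L_{X_H}\overline{\omega}=X_H\lrcorner d\overline{\omega}+d(X_H\lrcorner\overline{\omega})=d(dK)=0$. Next I would differentiate the defining relation $\overline{\omega}=S\lrcorner\omega$ along $X_H$. Setting $\beta(X,Y):=(S\lrcorner\omega)(X,Y)=\omega(SX,Y)$ for arbitrary vector fields $X,Y$ and expanding the Lie derivative of this $(0,2)$-tensor, a short computation gives
\begin{equation*}
(L_{X_H}\beta)(X,Y)=(L_{X_H}\omega)(SX,Y)+\omega\big((L_{X_H}S)X,\,Y\big),
\end{equation*}
where $(L_{X_H}S)X=[X_H,SX]-S[X_H,X]$. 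Substituting $\beta=\overline{\omega}$ and the two vanishing Lie derivatives from the previous step yields $\omega\big((L_{X_H}S)X,\,Y\big)=0$ for all $X,Y$; since $\omega$ is nondegenerate this forces $L_{X_H}S=0$.

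Finally, since the Lie derivative is a derivation of the tensor algebra and commutes with contractions, one has $L_{X_H}(S^k)=\sum_{j=0}^{k-1}S^{j}\,(L_{X_H}S)\,S^{k-1-j}=0$ for every $k\ge 1$, and therefore $L_{X_H}\big(\operatorname{tr}(S^k)\big)=\operatorname{tr}\big(L_{X_H}(S^k)\big)=0$. Equivalently, $L_{X_H}S=0$ says $\phi_t^*S=S$ for the flow $\phi_t$ of $X_H$, whence $\phi_t^*\operatorname{tr}(S^k)=\operatorname{tr}(S^k)$. In either form, each function $\operatorname{tr}(S^k)$ is a constant of motion for $(M,\omega,H)$, which is the claim. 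I expect the only point requiring care to be the bookkeeping in the Leibniz expansion that produces the identity for $L_{X_H}(S\lrcorner\omega)$; everything else (nondegeneracy of $\omega$, naturality of the trace under Lie differentiation) is routine, so I would write that expansion out explicitly and treat the rest as brief remarks.
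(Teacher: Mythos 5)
Your proposal is correct and follows essentially the same route as the paper: both establish $L_{X_H}\omega=0$ and $L_{X_H}\overline{\omega}=0$ from closedness and the canonoid condition, apply the Leibniz rule to $\overline{\omega}=S\lrcorner\omega$ to get $(L_{X_H}S)\lrcorner\omega=0$, invoke nondegeneracy of $\omega$ to conclude $L_{X_H}S=0$, and then use the derivation property of the Lie derivative together with the trace to conclude $L_{X_H}\operatorname{tr}(S^l)=0$. Your write-up merely makes explicit the pointwise Leibniz expansion and the nondegeneracy step that the paper leaves implicit.
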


\begin{proof}
\begin{equation}
\begin{split}
L_{X_{H}}\overline{\omega}=(L_{X_{H}}S)\lrcorner \omega+S\lrcorner(L_{X_{H}}\omega)&\Longrightarrow  0=(L_{X_{H}}S)\lrcorner \omega\\
&\Longrightarrow L_{X_{H}}S=0. 
\end{split}
\end{equation}
Then for $l=1,2,\cdots$, we have
\begin{equation}
L_{X_{H}}tr(S^{l})=tr(L_{X_{H}}S^{l})=\sum_{i=1}^{l-1}tr(S^{i-1}(L_{X_{H}}S)S^{l-i})=0.
\end{equation}
\end{proof}

It is worth mentioning that the maximum number of functionally independent constants of motion obtained in this way is $n$. This result follows from the fact that the characteristic polynomial of the product of two antisymmetric $2n \times 2n$ matrices is the square of a polynomial of degree $n$ (see \cite{G.F4}). 

So far, we have not considered the Nijenhuis torsion tensor of $S$, in fact if it vanishes then $(M,\omega)$ becomes a symplectic-Nijenhuis manifold and the traces of the powers of $S$ are constants of motion in involution. Indeed, let us write down $S$ in canonical coordinates $(q^{1},\cdots,q^{n},p_{1},\cdots,p_{n})$. In order to have a compact local form of the relevant geometric objects, we write $(x^{1},\cdots,x^{2n})=(q^{1},\cdots,q^{n},p_{1},\cdots,p_{n})$ then we have
\begin{equation}
\omega=\frac{1}{2}\epsilon_{\mu\nu}dx^{\mu}\wedge dx^{\nu},
\end{equation}
where $\epsilon_{\mu\nu}$ is the entry in the $\mu$-th row and the $\nu$-th column of the $2n\times 2n$  matrix $\epsilon=\left(\begin{array}{cc} 0 & I \\ -I & 0 \end{array}\right)$. Also, 
\begin{equation}
X_{H}=\epsilon^{\mu\nu}\frac{\partial H}{\partial x^{\mu}}\frac{\partial}{\partial x^{\nu}},
\end{equation}
here $\epsilon^{\mu\nu}$ denote the entries of the inverse matrix of $\epsilon$, namely $\epsilon^{-1}=\left(\begin{array}{cc} 0 & -I \\ I & 0 \end{array}\right)$ ($\epsilon^{\mu\nu}$ are the local components of the Poisson tensor defined by the symplectic form $\omega$). The matrices $\epsilon$ and $\epsilon^{-1}$ are antisymmetric and $\epsilon_{\mu\nu}=\epsilon^{\nu\mu}$. The corresponding equations of motion are
\begin{equation}
\dot{x}^{\mu}=\epsilon^{\mu\nu}\frac{\partial H}{\partial x^{\nu}}.
\end{equation}
On the other hand
\begin{equation}
\overline{\omega}=dQ^{i}\wedge dP_{i}=\frac{\partial Q^{i}}{\partial x^{\mu}}\frac{\partial P_{i}}{\partial x^{\nu}}dx^{\mu}\wedge dx^{\nu}=\frac{1}{2}[x^{\mu},x^{\nu}]dx^{\mu}\wedge dx^{\nu},
\end{equation}
where $[x^{\mu},x^{\nu}]=\frac{\partial Q^{i}}{\partial x^{\mu}}\frac{\partial P_{i}}{\partial x^{\nu}}-\frac{\partial Q^{i}}{\partial x^{\nu}}\frac{\partial P_{i}}{\partial x^{\mu}}$ is the Lagrange bracket of $x^{\mu}$ and $x^{\nu}$. Now
\begin{equation*}
\frac{1}{2}[x^{\mu},x^{\nu}]dx^{\mu}\wedge dx^{\nu}=(S^{\alpha}_{\beta}\frac{\partial}{\partial x^{\alpha}}\otimes dx^{\beta})\lrcorner \frac{1}{2}\epsilon_{\mu\nu}dx^{\mu}\wedge dx^{\nu}
\end{equation*}
which implies that 
\begin{equation*}
[x^{\beta},x^{\nu}]=\epsilon_{\alpha\nu}S^{\alpha}_{\beta} \ ,
\end{equation*}
as well as
\begin{equation*}
S^{\alpha}_{\beta}=\epsilon^{\lambda\alpha}[x^{\beta},x^{\lambda}];
\end{equation*}
so we have that the expression of $S$ in the canonical coordinates $(x^{1},\cdots,x^{2n})=(q^{1},\cdots,q^{n},p_{1},\cdots,p_{n})$ takes the form
\begin{equation}
S=\epsilon^{\lambda\alpha}[x^{\beta},x^{\lambda}]\frac{\partial}{\partial x^{\alpha}}\otimes dx^{\beta}.
\end{equation}
It is known (see \cite{MM,KM,Das}) that if we have two Poisson brackets on a manifold and a sequence of functions satisfying the Lenard recursion relations for such Poisson brackets then the functions in that sequence are pairwise in involution with respect to both Poisson brackets. For two Poisson brackets $\lbrace\rbrace_{1}$ and $\lbrace\rbrace_{2}$ (or two Poisson tensors $P_{1}$ and $P_{2}$), and a sequence of functions $(f_{k})_{k=1}^{\infty}$, the Lenard recursion relations read
\begin{equation}
\lbrace f_{k},\cdot\rbrace_{1}=\lbrace f_{k+1},\cdot\rbrace_{2}\hspace{1cm}(P_{1}(df_{k})=P_{2}(df_{k+1})).
\end{equation}
In our case we have two Poisson brackets $\lbrace,\rbrace$ and $\overline{\lbrace,\rbrace}$ (or two Poisson tensors $P$ and $\overline{P}$ defined by $\omega$ and $\overline{\omega}$ respectively, and the Nijenhuis tensor $S$ is defined as in \cite{KM} by $S=P\overline{P}^{-1}$). For the sequence of functions $(tr(S^{k}))_{k=1}^{\infty}$ , the Lenard recursion relations are
\begin{equation}\label{LRR}
\lbrace tr(S^{k}),\cdot\rbrace=\overline{\lbrace tr(S^{k+1}),\cdot\rbrace}.
\end{equation}
In canonical coordinates $(x^{1},\cdots,x^{2n})=(q^{1},\cdots,q^{n},p_{1},\cdots,p_{n})$ the Lenard recursion relations (\ref{LRR}) have the form
\begin{equation}
\epsilon^{\lambda\alpha}\frac{\partial tr(S^{k})}{\partial x^{\alpha}}=\overline{\epsilon}^{\lambda\beta}\frac{\partial tr(S^{k+1})}{\partial x^{\beta}},
\end{equation}
where $\overline{\epsilon}^{\lambda\beta}$ are the entries of the inverse matrix of $(\overline{\epsilon}_{\lambda\beta})=([x^{\lambda},x^{\beta
}])$ ($\overline{\epsilon}^{\lambda\beta}$ are the local components of the Poisson tensor defined by the symplectic form $\overline{\omega}$). Thus,
\begin{equation}
\begin{split}
\epsilon^{\lambda\alpha}\frac{\partial tr(S^{k})}{\partial x^{\alpha}}=\overline{\epsilon}^{\lambda\beta}\frac{\partial tr(S^{k+1})}{\partial x^{\beta}} &\Leftrightarrow \epsilon^{\lambda\alpha}\overline{\epsilon}_{\beta\lambda}\frac{\partial tr(S^{k})}{\partial x^{\alpha}}=\frac{\partial tr(S^{k+1})}{\partial x^{\beta}}\\
&\Leftrightarrow S^{\alpha}_{\beta}\frac{\partial tr(S^{k})}{\partial x^{\alpha}}=\frac{\partial tr(S^{k+1})}{\partial x^{\beta}}\\
&\Leftrightarrow S^{\alpha}_{\beta}\frac{\partial tr(S^{k})}{\partial x^{\alpha}}-\frac{\partial tr(S^{k+1})}{\partial x^{\beta}}=0.
\end{split}
\end{equation}
The component $N^{\lambda}_{\beta\gamma}$ of the Nijenhuis torsion tensor of $S$ is
\begin{equation}
\frac{\partial S_{\gamma}^{\lambda}}{\partial x^{\nu}}S_{\beta}^{\nu}-\frac{\partial S_{\beta}^{\lambda}}{\partial x^{\nu}}S_{\gamma}^{\nu}+\left( \frac{\partial S_{\beta}^{\nu}}{\partial x^{\gamma}}-\frac{\partial S_{\gamma}^{\nu}}{\partial x^{\beta}}\right)S_{\nu}^{\lambda}.
\end{equation}
We have that
\begin{equation}
N_{\beta\gamma}^{\lambda}(S^{k-1})_{\lambda}^{\gamma}=S^{\alpha}_{\beta}\frac{\partial tr(S^{k})}{\partial x^{\alpha}}-\frac{\partial tr(S^{k+1})}{\partial x^{\beta}}.
\end{equation}
Therefore, if the Nijenhuis torsion tensor of $S$ vanishes then the sequence $(tr(S^{k}))_{k=1}^{\infty}$ satisfies the Lenard recursion relations, that is, if $S$ is a Nijenhuis tensor then the traces of the powers of $S$ are functions pairwise in involution.

\section{Cosymplectic geometry and time-dependent Hamiltonian systems}
\label{sec4}

In this section, a review of the cosymplectic geometry and the associated formulation of time-dependent Hamiltonian systems under cosymplectic manifolds is presented. As in the previous section \ref{sec2}, the purpose of this is to set up notation and language. 

\begin{de}
Let $M$ be a $2n+1$ dimensional smooth manifold. A cosymplectic structure on $M$ is a couple $(\Omega,\eta)$, where $\Omega$ is a closed 2-form on $M$ and $\eta$ is a closed 1-form on $M$ such that $\eta\wedge\Omega^{n}\neq 0$. If $(\Omega,\eta)$ is a cosymplectic structure on $M$ we say that $(M,\Omega,\eta)$ is a cosymplectic manifold.
\end{de}

Let $(M,\Omega,\eta)$ be a cosymplectic manifold of dimension $2n+1$. Around any point $p\in M$ there exist local coordinates $(q^{1},\cdots,q^{n},p_{1},\cdots,p_{n},t)$, called canonical coordinates or Darboux coordinates, such that
\begin{equation}
\Omega=dq^{i}\wedge dp_{i}\hspace{1cm}\textit{and}\hspace{1cm}\eta=dt.
\end{equation}
There exists a distinguished vector field $R$ on $M$, called the Reeb vector field, which obeys
\begin{equation}
R\lrcorner \Omega =0 \hspace{1cm}\textit{and}\hspace{1cm} R\lrcorner \eta =1.
\end{equation}
In canonical coordinates we have $R=\frac{\partial}{\partial t}$.

For each $f\in C^{\infty}(M)$ is assigned a vector field $X_{f}$ on $M$, called the Hamiltonian vector field for $f$, according to
\begin{equation}
X_{f}\lrcorner \Omega =df-R(f)\eta \hspace{1cm}\textit{and}\hspace{1cm} X_{f}\lrcorner \eta =0.
\end{equation}
In canonical coordinates we have
\begin{equation}
X_{f}=\frac{\partial f}{\partial p_{i}}\frac{\partial}{\partial q^{i}}-\frac{\partial f}{\partial q^{i}}\frac{\partial}{\partial p_{i}}.
\end{equation}
The assignment $f\longmapsto X_{f}$ is linear, that is
\begin{equation}
X_{f+\alpha g}=X_{f}+\alpha X_{g},
\end{equation}
$\forall f,g\in C^{\infty}(M)$ and $\forall \alpha \in\mathbb{R}$. Given $f,g \in C^{\infty}(M)$ the Poisson bracket of $f$ and $g$ is defined by
\begin{equation}
\lbrace f,g\rbrace=X_{g}(f)=\Omega(X_{f},X_{g}).
\end{equation}
In canonical coordinates, it reads
\begin{equation}
\lbrace f,g\rbrace=\frac{\partial f}{\partial q^{i}}\frac{\partial g}{\partial p_{i}}-\frac{\partial f}{\partial p_{i}}\frac{\partial g}{\partial q^{i}}.
\end{equation}

The theory of time-dependent Hamiltonian systems can be developed under the mathematical formalism of cosymplectic geometry (see \cite{LR,Cantr,LS}). Given $H\in C^{\infty}(M)$ the dynamics of the Hamiltonian system on $(M,\Omega,\eta)$ (the phase space) with Hamiltonian function $H$ is defined by the evolution vector field $E_{H}=X_{H}+R$. $E_{H}$ is the only vector field on $M$ such that 
\begin{equation}
E_{H}\lrcorner (\Omega+dH\wedge\eta)=0 \hspace{1cm}\textit{and}\hspace{1cm} E_{H}\lrcorner \eta =1.
\end{equation}
In canonical coordinates we have
\begin{equation}
E_{H}=\frac{\partial H}{\partial p_{i}}\frac{\partial}{\partial q^{i}}-\frac{\partial H}{\partial q^{i}}\frac{\partial}{\partial p_{i}}+\frac{\partial}{\partial t}.
\end{equation}
The trajectories $\psi(s)=(q^{1}(s),\cdots,q^{n}(s),p_{1}(s),\cdots,p_{n}(s),t(s))$ of the system are the integral curves of $E_{H}$, they satisfy the Hamilton equations of motion
\begin{equation}
\dot{q^{i}} =\frac{\partial H}{\partial p_{i}}, \hspace{1cm}
\dot{p_{i}} =-\frac{\partial H}{\partial q^{i}}, \hspace{1cm}
\dot{t}=1.
\end{equation}
The condition $\dot{t}=1$ (then, $t=s$) implies that the temporal parameter for the system is $t$, that is, the trajectories of the system are parametrized by $t$
\begin{equation}
\psi(t)=(q^{1}(t),\cdots,q^{n}(t),p_{1}(t),\cdots,p_{n}(t),t).
\end{equation}

The evolution of a function $f\in C^{\infty}(M)$ (an observable) along the trajectories of the system is given by
\begin{equation}
\dot{f}= \mathcal{L}_{E_{H}}f=E_{H}f=X_{H}f+R(f)=\lbrace f,H\rbrace+\frac{\partial f}{\partial t}.
\end{equation}
We say that a function $f\in C^{\infty}(M)$ as a constant of motion of the system if it is constant along the trajectories of the system, that is, $f$ is a constant of motion if $L_{E_{H}}f=0$ ($\lbrace f,H\rbrace+\frac{\partial f}{\partial t}=0$).

\section{Canonical and canonoid transformations for time-dependent Hamiltonian systems}
\label{sec5}

In this section we introduce canonoid transformations for time-dependent Hamiltonian systems under the formalism of cosymplectic geometry. The main result is presented in theorem \ref{te2}, it is analogous to theorem \ref{te1} for time-independent Hamiltonian system with symplectic geometry. It is worth mentioning that, as it was stated in the introduction, there are some differences between our approach and those presented in previous works such as \cite{Ne,Ca,G.F4,TCA}.

Now, let us consider a cosymplectic manifold $(M,\Omega,\eta)$ of dimension $2n+1$.
\begin{de}
We say that a diffeomorphism $F:M\longrightarrow M$ is a canonical transformation if $F^{*}\Omega=\Omega$ and $F^{*}\eta=\eta$.
\end{de} 
Let $F:M\longrightarrow M$ be a diffeomorphism. We know that $(F^{*}\Omega,F^{*}\eta)$ is a cosymplectic structure on $M$, let us denote it by $(\overline{\Omega},\overline{\eta})$, and by $\overline{\lbrace,\rbrace}$ the Poisson bracket defined by it and by $\overline{R}$ the Reeb vector field for this cosymplectic structure. Around any point $p\in M$ there are local coordinates $(Q^{1},\cdots,Q^{n},P_{1},\cdots,P_{n},T)$ such that
\begin{equation}
\overline{\Omega}=dQ^{i}\wedge dP_{i}\hspace{1cm},\hspace{1cm}\overline{\eta}=dT\hspace{1cm}\textit{and}\hspace{1cm}\overline{R}=\frac{\partial}{\partial T}.
\end{equation} 
Given $f\in C^{\infty}(M)$, let us denote by $\overline{X}_{f}$ the Hamiltonian vector field for $f$ defined by $(\overline{\Omega},\overline{\eta})$, then $\overline{\lbrace f,g\rbrace}=\overline{X}_{g}f$. The following propositions are equivalent
\begin{enumerate}
\item $F$ is a canonical transformation,
\item $\overline{X}_{f}=X_{f}$ $\forall f \in C^{\infty}(M)$ and
\item $\overline{\lbrace f,g\rbrace}=\lbrace f,g\rbrace$ $\forall f,g \in C^{\infty}(M)$.
\end{enumerate}
So, a diffeomorphism from a cosymplectic manifold to itself is a canonical transformation if a and only if it preserves the Poisson bracket. 

If we think of $F$ locally as a coordinate transformation $(q^{1},\cdots,q^{n},p_{1},\cdots,p_{n},t)\longmapsto (Q^{1},\cdots,Q^{n},P_{1},\cdots,P_{n},T)$  then it is a canonical transformation if and only if
\begin{equation}
\lbrace Q^{i},Q^{j}\rbrace=\lbrace P_{i},P_{j}\rbrace=0,\hspace{1cm}\lbrace Q^{i},P_{j}\rbrace=\delta^{i}_{j}\hspace{1cm}\textit{and}\hspace{1cm}T=t.
\end{equation}

Let us consider the Hamiltonian system $(M,\Omega,\eta,H)$; if $F$ is a canonical transformation then the equations of motion in the coordinates $(Q^{1},\cdots,Q^{n},P_{1},\cdots,P_{n},t)$ are
\begin{equation}
\begin{split}
\dot{Q^{i}} &= \lbrace Q^{i},H\rbrace=\overline{\lbrace Q^{i},H\rbrace}=\frac{\partial H}{\partial P_{i}},\\
\dot{P_{i}} &=\lbrace P_{i},H\rbrace=\overline{\lbrace P_{i},H\rbrace}=-\frac{\partial H}{\partial Q^{i}}.
\end{split}
\end{equation}
Hence, canonical transformations preserve the Hamiltonian form of the equations of motion. As in the symplectic case, there are transformations for Hamiltonian systems that preserve the Hamiltonian form of equations of motion and they are not necessarily canonical transformations, we also called canonoid transformations this more general transformations for Hamiltonian systems.

\begin{de}
We say that the diffeomorphism $F:M\longrightarrow M$ is a canonoid transformation for the Hamiltonian system $(M,\Omega,\eta,H)$ if there exists a function $K\in C^{\infty}(M)$ such that 
\begin{equation}
X_{H}\lrcorner \overline{\Omega}=dK-\overline{R}(K)\overline{\eta}\hspace{1cm}\textit{and}\hspace{1cm}X_{H}\lrcorner \overline{\eta}=0.
\end{equation}
\end{de}
Again we can observe that every canonical transformation is also a canonoid transformation.

For the rest of this section let us suppose that $F$ is a canonoid transformation. We have two Hamiltonian systems $(M,\Omega,\eta,H)$ and $(M,\overline{\Omega},\overline{\eta},K)$; the temporal parameter for the first one is $t$ whilst the temporal parameter for the second one is $T$, then $T=t$ (formally $T=F^{*}t$), so we have that $\overline{\eta}=\eta$ and $\overline{R}=R$. The dynamics of both systems is defined by the same evolution vector field $E_{H}$ (i.e., we have two Hamiltonian representations for the same mechanical system).

Given $f\in C^{\infty}(M)$ we have that
\begin{equation}
\dot{f}=\lbrace f,H\rbrace+Rf=\overline{\lbrace f,K\rbrace}+Rf,
\end{equation}
then $\lbrace f,H\rbrace=\overline{\lbrace f,K\rbrace}$.
In general, canonoid transformations do not preserve the Poisson bracket. The equations of motion in the coordinates $(Q^{1},\cdots,Q^{n},P_{1},\cdots,P_{n},t)$ are
\begin{equation}
\begin{split}
\dot{Q^{i}} &= \lbrace Q^{i},H\rbrace+\frac{\partial Q^{i}}{\partial t}=\overline{\lbrace Q^{i},K\rbrace}+\frac{\partial Q^{i}}{\partial T}=\frac{\partial K}{\partial P_{i}},\\
\dot{P_{i}} &=\lbrace P_{i},H\rbrace+\frac{\partial P_{i}}{\partial t}=\overline{\lbrace P_{i},K\rbrace}+\frac{\partial P_{i}}{\partial T}=-\frac{\partial K}{\partial Q^{i}}\ ,
\end{split}
\end{equation}
so that the Hamiltonian form of the equations of motion is invariant under canonoid transformations. We can see that, reciprocally, transformations that preserve the Hamiltonian form of the equations of motion are canonoid. 

In this case, the 2-forms $\Omega$ and $\overline{\Omega}$ are not invariant under the flow of $E_{H}$, indeed
\begin{equation}
L_{E_{H}}\Omega = E_{H}\lrcorner d\Omega+d(E_{H}\lrcorner \Omega)=d(E_{H}\lrcorner \Omega)=d((X_{H}+R)\lrcorner\Omega)=d(X_{H}\lrcorner\Omega)=d(dH-RH\eta)=-d(RH)\wedge\eta \ ,
\end{equation}
and analogously we see that 
\begin{equation}
L_{E_{H}}\overline{\Omega}=-d(RK)\wedge\eta.
\end{equation}
Despite this, similarly to the symplectic case we define the (1,1)-tensor field $S$ on $M$ by
\begin{equation}
\overline{\Omega}=S\lrcorner \Omega \hspace{1cm}\textit{and}\hspace{1cm}S\lrcorner\eta=0.
\label{STDH} 
\end{equation}
In this case $S$ is not invariant under the flow of $E_{H}$, but as in the symplectic case, the traces of the powers of $S$ are constants of motion.

\begin{te}\label{te2}
If $F:M\longrightarrow M$ is a canonoid transformation for the time-dependent Hamiltonian system $(M,\Omega,\eta,H)$, then the traces of the powers of the (1,1)-tensor field $S$ defined in (\ref{STDH}) are constants of motion.
\end{te}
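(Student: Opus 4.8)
The plan is to follow the strategy of the proof of Theorem~\ref{te1}: reduce the claim $L_{E_{H}}tr(S^{l})=0$ to an analysis of $L_{E_{H}}S$. The genuinely new feature compared with the symplectic case is that neither $\Omega$ nor $\overline{\Omega}$ is invariant along $E_{H}$; however the two failures of invariance, $L_{E_{H}}\Omega=-d(RH)\wedge\eta$ and $L_{E_{H}}\overline{\Omega}=-d(RK)\wedge\eta$, are both proportional to $\eta$, and this — together with the defining relation $S\lrcorner\eta=0$ — is exactly what saves the argument.

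First I would differentiate along $E_{H}$ the two equations that define $S$. Since $L_{E_{H}}\eta=d(E_{H}\lrcorner\eta)+E_{H}\lrcorner d\eta=d(1)=0$, applying $L_{E_{H}}$ to $S\lrcorner\eta=0$ gives $(L_{E_{H}}S)\lrcorner\eta=0$, so $L_{E_{H}}S$ still takes values in $\ker\eta$. Applying $L_{E_{H}}$ to $\overline{\Omega}=S\lrcorner\Omega$ and using the Leibniz rule for the Lie derivative gives $(L_{E_{H}}S)\lrcorner\Omega=L_{E_{H}}\overline{\Omega}-S\lrcorner(L_{E_{H}}\Omega)=-d(RK)\wedge\eta+S\lrcorner(d(RH)\wedge\eta)$; because $\eta(SX)=0$ for every $X$, the term $S\lrcorner(d(RH)\wedge\eta)$ is again proportional to $\eta$, and a short computation then shows that $\Omega\big((L_{E_{H}}S)X,Y\big)=0$ for all $X,Y\in\ker\eta$.

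Next I would invoke the nondegeneracy contained in the cosymplectic condition $\eta\wedge\Omega^{n}\neq 0$: it forces $\ker\Omega$ to be the line spanned by $R$ (transverse to $\ker\eta$, since $\eta(R)=1$) and forces $\Omega$ to restrict to a nondegenerate $2$-form on the rank-$2n$ distribution $\ker\eta$. Combining this with $(L_{E_{H}}S)\lrcorner\eta=0$ and the vanishing of $\Omega\big((L_{E_{H}}S)X,Y\big)$ on $\ker\eta$, one concludes that $(L_{E_{H}}S)X=0$ for every $X\in\ker\eta$, hence $L_{E_{H}}S=v\otimes\eta$ with $v:=(L_{E_{H}}S)R\in\ker\eta$ (one can check that $v$ is the Hamiltonian vector field $X_{RK}$). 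The proof then closes with the trace computation: since $S$, hence every $S^{l-1}$, maps $TM$ into $\ker\eta$, we get $tr\big(S^{l-1}L_{E_{H}}S\big)=tr\big((S^{l-1}v)\otimes\eta\big)=\eta(S^{l-1}v)=0$; therefore $L_{E_{H}}tr(S^{l})=tr(L_{E_{H}}S^{l})$ is a sum of traces each of which, by cyclicity, equals $tr\big(S^{l-1}L_{E_{H}}S\big)=0$, so $tr(S^{l})$ is a constant of motion for every $l=1,2,\ldots$.

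I expect the delicate part to be the middle step: pinning down the precise shape $L_{E_{H}}S=v\otimes\eta$ with $\eta(v)=0$ from the Leibniz identity. This requires keeping careful track of the contraction convention for $(1,1)$-tensors acting on forms, using $S\lrcorner\eta=0$ so that the error terms remain multiples of $\eta$, and genuinely exploiting the mixed degenerate/nondegenerate structure of the cosymplectic pair $(\Omega,\eta)$. Once $L_{E_{H}}S$ has been identified, everything else is the same trace bookkeeping as in the proof of Theorem~\ref{te1}.
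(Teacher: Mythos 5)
Your proof is correct, and it reaches the paper's key intermediate fact by a different, coordinate-free route. The paper proves the same structural statement — that $L_{E_{H}}S$ has only components of the form $A^{\alpha}_{t}\,\frac{\partial}{\partial x^{\alpha}}\otimes dt$, i.e.\ precisely your $L_{E_{H}}S=v\otimes\eta$ with $\eta(v)=0$ — but it does so by writing $S$ explicitly in Darboux coordinates, computing every component of $L_{E_{H}}S$, and matching them against the identity $(L_{E_{H}}S)\lrcorner\Omega=S\lrcorner\bigl(d(RH)\wedge\eta\bigr)-d(RK)\wedge\eta$; this yields $A^{\alpha}_{\beta}=0$ and the explicit formula $A^{\alpha}_{t}=\epsilon^{\nu\alpha}\frac{\partial^{2}K}{\partial x^{\nu}\partial t}$, which is exactly your identification $v=X_{RK}$. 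You instead extract the same conclusion invariantly: $L_{E_{H}}\eta=0$ together with $S\lrcorner\eta=0$ forces $L_{E_{H}}S$ to take values in $\ker\eta$, the two error terms $-d(RK)\wedge\eta$ and $S\lrcorner\bigl(d(RH)\wedge\eta\bigr)$ vanish on $\ker\eta\times\ker\eta$, and nondegeneracy of $\Omega$ on $\ker\eta$ then kills $(L_{E_{H}}S)|_{\ker\eta}$. What your version buys is transparency and economy: it isolates the two facts that actually matter (the failures of invariance are multiples of $\eta$, and $S$, $L_{E_{H}}S$ map into $\ker\eta$), and it makes explicit the final bookkeeping $tr\bigl(S^{l-1}(v\otimes\eta)\bigr)=\eta(S^{l-1}v)=0$, which the paper leaves implicit when it asserts that the sum of traces vanishes. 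What the paper's computation buys is the closed-form local expression for $L_{E_{H}}S$ (hence for the obstruction $X_{RK}$), which you only note in passing. Both arguments rely on the previously established relations $L_{E_{H}}\Omega=-d(RH)\wedge\eta$ and $L_{E_{H}}\overline{\Omega}=-d(RK)\wedge\eta$ (the latter using $\overline{\eta}=\eta$, $\overline{R}=R$), so no gap arises from quoting them.
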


\begin{proof}
First let us see $S$ in canonical coordinates $(q^{1},\cdots,q^{n},p_{1},\cdots,p_{n},t)$. As in section \ref{sec3}, in order to have a compact local form of the geometric objects, we write $(x^{1},\cdots,x^{2n})=(q^{1},\cdots,q^{n},p_{1},\cdots,p_{n})$ then we have
\begin{equation}
\Omega=\frac{1}{2}\epsilon_{\mu\nu}dx^{\mu}\wedge dx^{\nu},
\end{equation}
where $\epsilon_{\mu\nu}$ is the entry the $\mu$-th row and the $\nu$-th column of the symplectic matrix $\epsilon$ defined in section \ref{sec3}.
\begin{equation}
E_{H}=\epsilon^{\mu\nu}\frac{\partial H}{\partial x^{\mu}}\frac{\partial}{\partial x^{\nu}}+\frac{\partial}{\partial t},
\end{equation}
here $\epsilon^{\mu\nu}$ is the entry the $\mu$-th row and the $\nu$-th column of the inverse matrix of $\epsilon$. The corresponding equations of motion read
\begin{equation}
\dot{x}^{\mu}=\epsilon^{\mu\nu}\frac{\partial H}{\partial x^{\nu}}.
\end{equation}
The local form of the tensor $S$ is given by
\begin{equation}
S=S^{\alpha}_{\beta}\frac{\partial}{\partial x^{\alpha}}\otimes dx^{\beta}+S^{\alpha}_{t}\frac{\partial}{\partial x^{\alpha}}\otimes dt+S^{t}_{\beta}\frac{\partial}{\partial t}\otimes dx^{\beta}+S^{t}_{t}\frac{\partial}{\partial t}\otimes dt.
\end{equation}
Since $S\lrcorner\eta=0$ then $S^{t}_{\beta}=S^{t}_{t}=0$; so that
\begin{equation}
S=S^{\alpha}_{\beta}\frac{\partial}{\partial x^{\alpha}}\otimes dx^{\beta}+S^{\alpha}_{t}\frac{\partial}{\partial x^{\alpha}}\otimes dt.
\end{equation}
Also, we have that
\begin{equation}
S^{\alpha}_{\beta}=\epsilon^{\lambda\alpha}[x^{\beta},x^{\lambda}]\hspace{1cm}\textit{and}\hspace{1cm}S^{\alpha}_{t}=\epsilon^{\lambda\alpha}[t,x^{\lambda}] \ ,
\end{equation}
indeed; first let us see the local expression of $\overline{\Omega}$
\begin{equation}
\overline{\Omega}=dQ^{i}\wedge dP_{i}=\frac{\partial Q^{i}}{\partial x^{\mu}}\frac{\partial P_{i}}{\partial x^{\nu}}dx^{\mu}\wedge dx^{\nu}+\frac{\partial Q^{i}}{\partial x^{\mu}}\frac{\partial P_{i}}{\partial t}dx^{\mu}\wedge dt+\frac{\partial Q^{i}}{\partial t}\frac{\partial P_{i}}{\partial x^{\nu}}dt\wedge dx^{\nu}=\frac{1}{2}[x^{\mu},x^{\nu}]dx^{\mu}\wedge dx^{\nu}+[x^{\mu},t]dx^{\mu}\wedge dt,
\end{equation}
where $[,]$ denotes the Lagrange bracket as in section \ref{sec3}. Thus, we have that
\begin{equation}
\frac{1}{2}[x^{\mu},x^{\nu}]dx^{\mu}\wedge dx^{\nu}+[x^{\mu},t]dx^{\mu}\wedge dt=(S^{\alpha}_{\beta}\frac{\partial}{\partial x^{\alpha}}\otimes dx^{\beta}+S^{\alpha}_{t}\frac{\partial}{\partial x^{\alpha}}\otimes dt)\lrcorner \frac{1}{2}\epsilon_{\mu\nu}dx^{\mu}\wedge dx^{\nu}
\end{equation}
which implies that
\begin{equation*}
[x^{\beta},x^{\nu}]=\epsilon_{\alpha\nu}S^{\alpha}_{\beta}\hspace{1cm}\textit{and}\hspace{1cm}[t,x^{\nu}]=\epsilon_{\alpha\nu}S^{\alpha}_{t},
\end{equation*}
as well as
\begin{equation*}
S^{\alpha}_{\beta}=\epsilon^{\lambda\alpha}[x^{\beta},x^{\lambda}]\hspace{1cm}\textit{and}\hspace{1cm}S^{\alpha}_{t}=\epsilon^{\lambda\alpha}[t,x^{\lambda}] \ .
\end{equation*}
Consequently, the expression of $S$ in the canonical coordinates $(x^{1},\cdots,x^{2n})=(q^{1},\cdots,q^{n},p_{1},\cdots,p_{n})$ is the following
\begin{equation}
S=\epsilon^{\lambda\alpha}[x^{\beta},x^{\lambda}]\frac{\partial}{\partial x^{\alpha}}\otimes dx^{\beta}+\epsilon^{\lambda\alpha}[t,x^{\lambda}]\frac{\partial}{\partial x^{\alpha}}\otimes dt.
\end{equation}
Now, we observe that in canonical coordinates 
\begin{equation}
\begin{split}
L_{E_{H}}S &=\epsilon^{\lambda\alpha}\left( (L_{E_{H}}[x^{\beta},x^{\lambda}])\frac{\partial}{\partial x^{\alpha}}\otimes dx^{\beta}+[x^{\beta},x^{\lambda}](L_{E_{H}}\frac{\partial}{\partial x^{\alpha}})\otimes dx^{\beta}+[x^{\beta},x^{\lambda}]\frac{\partial}{\partial x^{\alpha}}\otimes(L_{E_{H}}dx^{\beta})\right.\\
&+\left. (L_{E_{H}}[t,x^{\lambda}])\frac{\partial}{\partial x^{\alpha}}\otimes dt+[t,x^{\lambda}](L_{E_{H}}\frac{\partial}{\partial x^{\alpha}})\otimes dt+[t,x^{\lambda}]\frac{\partial}{\partial x^{\alpha}}\otimes(L_{E_{H}}dt)\right) \\
&=\epsilon^{\lambda\alpha}\left( \epsilon^{\mu\nu}\frac{\partial H}{\partial x^{\mu}}\frac{\partial[x^{\beta},x^{\lambda}]}{\partial x^{\nu}}\frac{\partial}{\partial x^{\alpha}}\otimes dx^{\beta}+\frac{\partial[x^{\beta},x^{\lambda}]}{\partial t}\frac{\partial}{\partial x^{\alpha}}\otimes dx^{\beta}-[x^{\beta},x^{\lambda}]\epsilon^{\mu\nu}\frac{\partial^{2}H}{\partial x^{\alpha}\partial x^{\mu}}\frac{\partial}{\partial x^{\nu}}\otimes dx^{\beta}\right. \\
&+\left. [x^{\beta},x^{\lambda}]\epsilon^{\mu\beta}\frac{\partial^{2}H}{\partial x^{\delta}\partial x^{\mu}}\frac{\partial}{\partial x^{\alpha}}\otimes dx^{\delta}+[x^{\beta},x^{\lambda}]\epsilon^{\mu\beta}\frac{\partial^{2}H}{\partial t\partial x^{\mu}}\frac{\partial}{\partial x^{\alpha}}\otimes dt+\epsilon^{\mu\nu}\frac{\partial H}{\partial x^{\mu}}\frac{\partial[t,x^{\lambda}]}{\partial x^{\nu}}\frac{\partial}{\partial x^{\alpha}}\otimes dt \right. \\
&+\left. \frac{\partial[t,x^{\lambda}]}{\partial t}\frac{\partial}{\partial x^{\alpha}}\otimes dt-[t,x^{\lambda}]\epsilon^{\mu\nu}\frac{\partial^{2}H}{\partial x^{\alpha}\partial x^{\mu}}\frac{\partial}{\partial x^{\nu}}\otimes dt \right)\\
&=\left(\epsilon^{\lambda\alpha}\epsilon^{\mu\nu}\frac{\partial H}{\partial x^{\mu}}\frac{\partial[x^{\beta},x^{\lambda}]}{\partial x^{\nu}}+\epsilon^{\lambda\alpha}\frac{\partial[x^{\beta},x^{\lambda}]}{\partial t}-\epsilon^{\lambda\nu}[x^{\beta},x^{\lambda}]\epsilon^{\mu\alpha}\frac{\partial^{2}H}{\partial x^{\nu}\partial x^{\mu}}+\epsilon^{\lambda\alpha}[x^{\delta},x^{\lambda}]\epsilon^{\mu\delta}\frac{\partial^{2}H}{\partial x^{\beta}\partial x^{\mu}}\right)\frac{\partial}{\partial x^{\alpha}}\otimes dx^{\beta}\\
&+\left( \epsilon^{\lambda\alpha}\epsilon^{\mu\nu}\frac{\partial H}{\partial x^{\mu}}\frac{\partial[t,x^{\lambda}]}{\partial x^{\nu}}+\epsilon^{\lambda\alpha}\frac{\partial[t,x^{\lambda}]}{\partial t}-\epsilon^{\lambda\nu}[t,x^{\lambda}]\epsilon^{\mu\alpha}\frac{\partial^{2}H}{\partial x^{\nu}\partial x^{\mu}}+\epsilon^{\lambda\alpha}[x^{\delta},x^{\lambda}]\epsilon^{\mu\delta}\frac{\partial^{2}H}{\partial t\partial x^{\mu}}\right) \frac{\partial}{\partial x^{\alpha}}\otimes dt.
\end{split}
\end{equation}
Let us remember the defining relation for $S$, $\overline{\Omega}=S\lrcorner \Omega$, so we have
\begin{equation}
\begin{split}
\overline{\Omega}=S\lrcorner \Omega &\Longrightarrow L_{E_{H}}\overline{\Omega}=L_{E_{H}}(S\lrcorner \Omega)\\
&\Longrightarrow -d(RK)\wedge\eta=(L_{E_{H}}S)\lrcorner\Omega+S\lrcorner(L_{E_{H}}\Omega)\\
&\Longrightarrow S\lrcorner(d(RH)\wedge\eta)-d(RK)\wedge\eta=(L_{E_{H}}S)\lrcorner\Omega.
\end{split}
\end{equation}
In short, the local form of $L_{E_{H}}S$ is
\begin{equation}
L_{E_{H}}S=A^{\alpha}_{\beta}\frac{\partial}{\partial x^{\alpha}}\otimes dx^{\beta}+A^{\alpha}_{t}\frac{\partial}{\partial x^{\alpha}}\otimes dt,
\end{equation}
hence
\begin{equation}
(S^{\alpha}_{\beta}\frac{\partial}{\partial x^{\alpha}}\otimes dx^{\beta}+S^{\alpha}_{t}\frac{\partial}{\partial x^{\alpha}}\otimes dt)\lrcorner\frac{\partial^{2}H}{\partial x^{\mu}\partial t}dx^{\mu}\wedge dt-\frac{\partial^{2}K}{\partial x^{\mu}\partial t}dx^{\mu}\wedge dt=(A^{\alpha}_{\beta}\frac{\partial}{\partial x^{\alpha}}\otimes dx^{\beta}+A^{\alpha}_{t}\frac{\partial}{\partial x^{\alpha}}\otimes dt)\lrcorner\frac{1}{2}\epsilon_{\mu\nu}dx^{\mu}\wedge dx^{\nu},
\end{equation}
which implies that
\begin{equation}
(S^{\alpha}_{\beta}\frac{\partial^{2}H}{\partial x^{\alpha}\partial t}-\frac{\partial^{2}K}{\partial x^{\beta}\partial t})dt=\epsilon_{\alpha\nu}A^{\alpha}_{\beta}dx^{\nu}\hspace{1cm}\textit{and}\hspace{1cm}S^{\alpha}_{t}\frac{\partial^{2}H}{\partial x^{\alpha}\partial t}dt+\frac{\partial^{2}K}{\partial x^{\nu}\partial t}dx^{\nu}=\epsilon_{\alpha\nu}A^{\alpha}_{t}dx^{\nu}
\end{equation}
as well as
\begin{equation}
A^{\alpha}_{\beta}=0\hspace{1cm}\textit{and}\hspace{1cm}A^{\alpha}_{t}=\epsilon^{\nu\alpha}\frac{\partial^{2}K}{\partial x^{\nu}\partial t}.
\end{equation}
So that the expression of $L_{E_{H}}S$ in the canonical coordinates $(x^{1},\cdots,x^{2n})=(q^{1},\cdots,q^{n},p_{1},\cdots,p_{n})$ is
\begin{equation}
L_{E_{H}}S=A^{\alpha}_{t}\frac{\partial}{\partial x^{\alpha}}\otimes dt=\epsilon^{\nu\alpha}\frac{\partial^{2}K}{\partial x^{\nu}\partial t}\frac{\partial}{\partial x^{\alpha}}\otimes dt.
\end{equation}
Therefore, for $l=1,2,\cdots$, we arrive to the following equation
\begin{equation}
L_{E_{H}}tr(S^{l})=tr(L_{E_{H}}S^{l})=\sum_{i=1}^{l-1}tr(S^{i-1}(L_{E_{H}}S)S^{l-i})=0.
\end{equation}
\end{proof}

Along the same lines as in the symplectic case, we can consider the Nijenhuis torsion tensor of $S$ and if $S$ is a Nijenhuis tensor then the traces of the powers of $S$ are pairwise in involution. In this case we have two Poisson brackets $\lbrace,\rbrace$ and $\overline{\lbrace,\rbrace}$ or two Poisson tensors $P$ and $\overline{P}$ defined by $\Omega$ and $\overline{\Omega}$ respectively. It is worth to mention that in this case the Nijenhuis tensor $S$ can not be defined by neither $P\overline{P}^{-1}$ nor by $\overline{P}P^{-1}$ since the Poisson tensors are not invertible (they are not defined by symplectic manifolds); despite this, if $S$ is a Nijenhuis tensor then the sequence of functions $(tr(S^{k}))_{k=1}^{\infty}$ satisfies the Lenard recursion relations for $P$ and $\overline{P}$. In canonical coordinates $(x^{1},\cdots,x^{2n},t)=(q^{1},\cdots,q^{n},p_{1},\cdots,p_{n},t)$ the Lenard recursion relations for the sequence of functions $(tr(S^{k}))_{k=1}^{\infty}$ have the form
\begin{equation}
\epsilon^{\lambda\alpha}\frac{\partial tr(S^{k})}{\partial x^{\alpha}}=\overline{\epsilon}^{\lambda\beta}\frac{\partial tr(S^{k+1})}{\partial x^{\beta}},
\end{equation}
where again $\overline{\epsilon}^{\lambda\beta}$ are the entries of the inverse matrix of $(\overline{\epsilon}_{\lambda\beta})=([x^{\lambda},x^{\beta}])$ (The components $[x^{\mu},t]$ of the 2-form $\Omega$ do not take part in the Poisson bracket); it is possible to consider the inverse of the matrix $(\overline{\epsilon}_{\lambda\beta})=([x^{\lambda},x^{\beta}])$ since the submanifolds of $M$ defined by $t=constant$ are the symplectic leaves of $\overline{P}$ (and also of $P$), indeed, locally we have
\begin{equation}
\overline{P}=\overline{\epsilon}^{\lambda\beta}\frac{\partial}{\partial x^{\lambda}}\otimes\frac{\partial}{\partial x^{\beta}}. 
\end{equation}
(For details about Poisson tensors see \cite{PS}). Since the components $S_{t}^{\alpha}$ of the tensor $S$ are not involved in the traces of the powers of $S$ then, similarly to the symplectic case, we have
\begin{equation} 
N_{\beta\gamma}^{\lambda}(S^{k-1})_{\lambda}^{\gamma}=S^{\alpha}_{\beta}\frac{\partial tr(S^{k})}{\partial x^{\alpha}}-\frac{\partial tr(S^{k+1})}{\partial x^{\beta}}.
\end{equation}
Hence, we conclude that if the Nijenhuis torsion tensor of $S$ vanishes then the sequence of functions $(tr(S^{k}))_{k=1}^{\infty}$ satisfies the Lenard recursion relations for $P$ and $\overline{P}$ and, consequently, the functions $tr(S^{k})$ are pairwise in involution with respect to both Poisson brackets.

\section{Contact geometry and contact Hamiltonian systems}
\label{sec6}

In this section we move to the formalism of contact Hamiltonian systems which naturally describe dissipative systems (for details see \cite{Contact,CoCo,Brav}), and they also have applications in thermodynamics, statistical mechanics and others \cite{CHD}. 

\begin{de}
Let $M$ be a $2n+1$ dimensional smooth manifold. A contact structure on $M$ is a 1-form $\theta$ on $M$ such that $\theta\wedge d\theta^{n}\neq 0$. If $\theta$ is a contact structure on $M$ we say that $(M,\theta)$ is a contact manifold.
\end{de}

Let $(M,\theta)$ be a contact manifold of dimension $2n+1$. Around any point $p\in M$ there exist local coordinates $(q^{1},\cdots,q^{n},p_{1},\cdots,p_{n},z)$, i.e. Darboux coordinates, such that
\begin{equation}
\theta=dz-p_{i}dq^{i}.
\end{equation}
There exists a distinguished vector field $R$ on $M$, called the Reeb vector field, which obeys 
\begin{equation}
R\lrcorner \theta =1 \hspace{1cm}\textit{and}\hspace{1cm} R\lrcorner d\theta =0.
\end{equation}
In canonical coordinates we simply have $R=\frac{\partial}{\partial z}$.

For each $f\in C^{\infty}(M)$ is assigned a vector field $X_{f}$ on $M$, called the Hamiltonian vector field for $f$, according to
\begin{equation}
X_{f}\lrcorner \theta = -f \hspace{1cm}\textit{and}\hspace{1cm} X_{f}\lrcorner d\theta =df-R(f)\theta.
\end{equation}
In canonical coordinates we have
\begin{equation}
X_{f}=\frac{\partial f}{\partial p_{i}}\frac{\partial}{\partial q^{i}}-\left( \frac{\partial f}{\partial q^{i}}+p_{i}\frac{\partial f}{\partial z}\right) \frac{\partial}{\partial p_{i}}+\left( p_{i}\frac{\partial f}{\partial p_{i}}-f\right)\frac{\partial}{\partial z}.
\end{equation}
It can be checked that the assignment $f\longmapsto X_{f}$ is linear, that is
\begin{equation}
X_{f+\alpha g}=X_{f}+\alpha X_{g},
\end{equation}
$\forall f,g\in C^{\infty}(M)$ and $\forall \alpha \in\mathbb{R}$.

Symplectic and cosymplectic manifolds are Poisson manifolds (symplectic and cosymplectic structures define Poisson brackets), but a contact manifold is strictly a Jacobi manifold, i.e., a contact structure on a manifold defines a Jacobi bracket. Given $f,g \in C^{\infty}(M)$ the Jacobi bracket of $f$ and $g$ is defined by
\begin{equation}
\lbrace f,g\rbrace=X_{g}(f)+fR(g)=\theta([X_{f},X_{g}]).
\end{equation}
In canonical coordinates we have
\begin{equation}
\lbrace f,g\rbrace=\frac{\partial f}{\partial q^{i}}\frac{\partial g}{\partial p_{i}}-\frac{\partial f}{\partial p_{i}}\frac{\partial g}{\partial q^{i}}+\frac{\partial f}{\partial z}\left( p_{i}\frac{\partial g}{\partial p_{i}}-g\right)-\frac{\partial g}{\partial z}\left( p_{i}\frac{\partial f}{\partial p_{i}}-f\right).
\end{equation}

Hamiltonian systems on contact manifolds are called contact Hamiltonian systems. Given $H\in C^{\infty}(M)$ the dynamics of the Hamiltonian on $(M,\theta)$ (the phase space) with Hamiltonian function $H$ is defined by the Hamiltonian vector field $X_{H}$.
In canonical coordinates we have
\begin{equation}
X_{H}=\frac{\partial H}{\partial p_{i}}\frac{\partial}{\partial q^{i}}-\left( \frac{\partial H}{\partial q^{i}}+p_{i}\frac{\partial H}{\partial z}\right) \frac{\partial}{\partial p_{i}}+\left( p_{i}\frac{\partial H}{\partial p_{i}}-H\right)\frac{\partial}{\partial z}.
\end{equation}
The trajectories $\psi(t)=(q^{1}(t),\cdots,q^{n}(t),p_{1}(t),\cdots,p_{n}(t),z(t))$ of the system are the integral curves of $X_{H}$, they satisfy the dissipative Hamilton equations of motion
\begin{equation}
\dot{q^{i}} =\frac{\partial H}{\partial p_{i}}, \hspace{1cm}
\dot{p_{i}} =-\left( \frac{\partial H}{\partial q^{i}}+p_{i}\frac{\partial H}{\partial z}\right) , \hspace{1cm}
\dot{z}=p_{i}\frac{\partial H}{\partial p_{i}}-H.
\end{equation}

The evolution of a function $f\in C^{\infty}(M)$ (an observable) along the trajectories of the system reads
\begin{equation}
\dot{f}= \mathcal{L}_{X_{H}}f=X_{H}f=\lbrace f,H\rbrace-fR(H).
\end{equation}
We say that a function $f\in C^{\infty}(M)$ is a constant of motion of the system if it is constant along the trajectories of the system, that is, $f$ is a constant of motion if $L_{X_{H}}f=0$ ($\lbrace f,H\rbrace-fR(H)=0$).

\section{Canonical and canonoid transformations for contact Hamiltonian systems}
\label{sec7}

By following the ideas developed in the  previous sections, in this section we introduce the notion of canonoid transformations for contact Hamiltonian systems. As in sections \ref{sec3} and \ref{sec5}, the main result will show that having a canonoid transformation for a given contact Hamiltonian system one can find constants of motion (conserved quantities).

Let us consider a contact manifold $(M,\theta)$ of dimension $2n+1$.
\begin{de}
We say that a diffeomorphism $F:M\longrightarrow M$ is a canonical transformation if $F^{*}\theta=\theta$.
\end{de}
Let $F:M\longrightarrow M$ be a diffeomorphism. We know that $F^{*}\theta$ is a contact structure on $M$, let us denote it by $\overline{\theta}$, by $\overline{\lbrace,\rbrace}$ the Jacobi bracket defined by it and by $\overline{R}$ the Reeb vector field for this contact structure. Around any point $p\in M$ there are local coordinates $(Q^{1},\cdots,Q^{n},P_{1},\cdots,P_{n},Z)$ such that
\begin{equation}
\overline{\theta}=dZ-P_{i}dQ^{i}\hspace{1cm}\hspace{1cm}\textit{and}\hspace{1cm}\overline{R}=\frac{\partial}{\partial Z}.
\end{equation}
Given $f\in C^{\infty}(M)$, let us denote by $\overline{X}_{f}$ the Hamiltonian vector field for $f$ defined by $\overline{\theta}$, then $\overline{\lbrace f,g\rbrace}=\overline{X}_{g}f+f\overline{R}(g)$. The following propositions are equivalent
\begin{enumerate}
\item $F$ is a canonical transformation,
\item $\overline{X}_{f}=X_{f}$ $\forall f \in C^{\infty}(M)$ and
\item $\overline{\lbrace f,g\rbrace}=\lbrace f,g\rbrace$ $\forall f,g \in C^{\infty}(M)$.
\end{enumerate}
So, a diffeomorphism from a cosymplectic manifold to itself is a canonical transformation if and only if it preserves the Jacobi bracket. 

Let us consider the contact Hamiltonian system $(M,\theta,H)$; if $F$ is a canonical transformation then the equations of motion in the coordinates $(Q^{1},\cdots,Q^{n},P_{1},\cdots,P_{n},Z)$ are
\begin{equation}
\begin{split}
\dot{Q^{i}} &= \lbrace Q^{i},H\rbrace-Q^{i}R(H)=\overline{\lbrace Q^{i},H\rbrace}-Q^{i}\overline{R}(H)=\frac{\partial H}{\partial P_{i}},\\
\dot{P_{i}} &=\lbrace P_{i},H\rbrace-P_{i}R(H)=\overline{\lbrace P_{i},H\rbrace}-P_{i}\overline{R}(H)=-\left( \frac{\partial H}{\partial Q^{i}}+P_{i}\frac{\partial H}{\partial Z}\right) ,\\
\dot{Z}&=\lbrace Z,H\rbrace-ZR(H)=\overline{\lbrace Z,H\rbrace}-Z\overline{R}(H)=P_{i}\frac{\partial H}{\partial P_{i}}-H.
\end{split}
\end{equation}
So that canonical transformations preserve the Hamiltonian form of the equations of motion. As in the symplectic and cosymplectic case, there are transformations for contact Hamiltonian systems that preserve the Hamiltonian form of equations of motion and they are not necessarily canonical, we also called canonoid transformations these more general transformations for contact Hamiltonian systems.

\begin{de}
We say that the diffeomorphism $F:M\longrightarrow M$ is a canonoid transformation for the contact Hamiltonian system $(M,\theta,H)$ if there exists a function $K\in C^{\infty}(M)$ such that 
\begin{equation}
X_{H}\lrcorner \overline{\theta} = -K \hspace{1cm}\textit{and}\hspace{1cm} X_{H}\lrcorner d\overline{\theta} =dK-\overline{R}(K)\overline{\theta}.
\end{equation}
\end{de}
Observe that every canonical transformation is also a canonoid transformation.

Let $F$ be a canonoid transformation. Given $f\in C^{\infty}(M)$ we have that
\begin{equation}
\dot{f}=\lbrace f,H\rbrace-fR(H)=X_{H}f+fR(H)-fR(H)=\overline{\lbrace f,K\rbrace}-f\overline{R}(K),
\end{equation}
so that 
\begin{equation}
\lbrace f,H\rbrace=\overline{\lbrace f,K\rbrace}-f\overline{R}(K)+fR(H).
\end{equation}
The equations of motion in the coordinates $(Q^{1},\cdots,Q^{n},P_{1},\cdots,P_{n},Z)$ are
\begin{equation}
\begin{split}
\dot{Q^{i}} &=\overline{\lbrace Q^{i},K\rbrace}-Q^{i}R(H)=\frac{\partial K}{\partial P_{i}},\\
\dot{P_{i}} &=\overline{\lbrace P_{i},K\rbrace}-P_{i}R(H)=-\left( \frac{\partial K}{\partial Q^{i}}+P_{i}\frac{\partial K}{\partial Z}\right),\\
\dot{Z} &=\overline{\lbrace Z,K\rbrace}-ZR(H)=P_{i}\frac{\partial K}{\partial P_{i}}-K.
\end{split}
\end{equation}
So that canonoid transformations preserve the Hamiltonian form of the equations of motion. We can observe that, reciprocally, transformations that preserve the Hamiltonian form of the equations of motion are canonoid.

As in the symplectic and cosymplectic case, we have two Hamiltonian systems $(M,\theta,H)$ and $(M,\overline{\theta}=F^{*}\theta,K)$ with the same phase space and the dynamics defined by the same Hamiltonian vector field $X_{H}$ (or two contact Hamiltonian representations for the same mechanical system). As in the cosymplectic case, we have that the geometric structures $\theta$ and $\overline{\theta}$ are not invariant under the flow of $X_{H}$, namely
\begin{equation}
L_{X_{H}}\theta=-R(H)\theta \hspace{1cm}\textit{and}\hspace{1cm}L_{X_{H}}\overline{\theta}=-\overline{R}(K)\overline{\theta}.
\end{equation}
Following the ideas presented in the previous sections, we define the (1,1)-tensor field $S$ on $M$ by
\begin{equation}
d\overline{\theta}=S\lrcorner d\theta \hspace{1cm}\textit{and}\hspace{1cm}S\lrcorner\theta=0.
\end{equation}
Let us write down $S$ in canonical coordinates $(q^{1},\cdots,q^{n},p_{1},\cdots,p_{n},z)$. Next, in order to have a compact local form of the involved geometric objects, we write $(x^{1},\cdots,x^{2n})=(q^{1},\cdots,q^{n},p_{1},\cdots,p_{n})$ then we have
\begin{equation}
d\theta=dq^{i}\wedge dp_{i}=\frac{1}{2}\epsilon_{\mu\nu}dx^{\mu}\wedge dx^{\nu} \ ,
\end{equation}
and 
\begin{equation}
d\overline{\theta}=dQ^{i}\wedge dP_{i}=\frac{1}{2}[x^{\mu},x^{\nu}]dx^{\mu}\wedge dx^{\nu}+[x^{\mu},z]dx^{\mu}\wedge dz.
\end{equation}
The local form of the tensor $S$ is
\begin{equation}
S=S^{\alpha}_{\beta}\frac{\partial}{\partial x^{\alpha}}\otimes dx^{\beta}+S^{\alpha}_{z}\frac{\partial}{\partial x^{\alpha}}\otimes dz+S^{z}_{\beta}\frac{\partial}{\partial z}\otimes dx^{\beta}+S^{z}_{z}\frac{\partial}{\partial z}\otimes dz.
\end{equation}
Since $S\lrcorner\theta=0$ then $S^{z}_{\beta}=p_{i}S^{i}_{\beta}$ and $S^{z}_{z}=p_{i}S^{i}_{z}$.
By the same calculus presented in section \ref{sec5} we conclude that
\begin{equation}
S^{\alpha}_{\beta}=\epsilon^{\lambda\alpha}[x^{\beta},x^{\lambda}]\hspace{1cm}\textit{and}\hspace{1cm}S^{\alpha}_{z}=\epsilon^{\lambda\alpha}[z,x^{\lambda}];
\end{equation}
so we have that the expression of $S$ in the canonical coordinates $(x^{1},\cdots,x^{2n},z)=(q^{1},\cdots,q^{n},p_{1},\cdots,p_{n},z)$ is
\begin{equation}
\label{Scont}
S=\epsilon^{\lambda\alpha}[x^{\beta},x^{\lambda}]\frac{\partial}{\partial x^{\alpha}}\otimes dx^{\beta}+\epsilon^{\lambda\alpha}[z,x^{\lambda}]\frac{\partial}{\partial x^{\alpha}}\otimes dz+p_{i}\epsilon^{\lambda i}[x^{\beta},x^{\lambda}]\frac{\partial}{\partial z}\otimes dx^{\beta}+p_{i}\epsilon^{\lambda i}[z,x^{\lambda}]\frac{\partial}{\partial z}\otimes dz.
\end{equation}

The notion of integrability for contact Hamiltonian systems is a bit different from that one for time-dependent or time-independent Hamiltonian systems. In particular, there are important geometric differences between the phase spaces of contact Hamiltonian systems and time-dependent or time-independent Hamiltonian systems. For the first ones the phase spaces are strictly Jacobi manifolds and for the second ones they are Poisson manifolds. The following definition is presented in \cite{Boyer}:
\begin{de}
The contact Hamiltonian system $(M,\theta,H)$ is said to be completely integrable if there exists $n+1$ independent constants of motion $H,f_{1},\cdots,f_{n}$ in involution.
\end{de}
According to this definition, for a contact Hamiltonian system to be integrable it is necessary that the Hamiltonian function is a constant of motion, which is not true in general, so we need to restrict our class of Hamiltonian systems. The following definition is presented in \cite{Boyer}:
\begin{de}
We say that the contact Hamiltonian system $(M,\theta,H)$ is good if $H$ is constant along the flow of the Reeb vector field $R$, or equivalently $H$ is a constant of motion. 
\end{de}
One important kind of a good contact Hamiltonian system is given by $(M,\theta,1)$. In this case, the dynamics of the system is defined by the Reeb vector field $R$; these kind of systems are called of Reeb type \cite{Visi}.

Let us suppose that $(M,\theta,H)$ and $(M,\overline{\theta},K)$ are good contact Hamiltonian systems, that is, $R(H)=\overline{R}(K)=0$. Then as in the symplectic case we have that the structures $\theta$ and $\overline{\theta}$ are invariant under the flow of $X_{H}$, indeed
\begin{equation}
L_{X_{H}}\theta=-R(H)\theta=0 \hspace{1cm}\textit{and}\hspace{1cm} L_{X_{H}}\overline{\theta}=-\overline{R}(H)\overline{\theta}=0,
\end{equation}
which implies that $L_{X_{H}}d\theta=L_{X_{H}}d\overline{\theta}=0$, so by following the same approach presented in section \ref{sec3} we have that the traces of the powers of $S$ are constants of motion.

\begin{te}\label{te3}
If $F:M\longrightarrow M$ is a canonoid transformation for the good contact Hamiltonian system $(M,\theta,H)$, then the traces of the powers of the (1,1)-tensor field $S$ defined above in (\ref{Scont}) are constants of motion.
\end{te}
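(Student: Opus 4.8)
The plan is to mimic closely the argument for Theorem \ref{te1}, exploiting the fact that the hypothesis ``good'' is precisely what forces $L_{X_{H}}\theta = 0$ and $L_{X_{H}}\overline{\theta} = 0$, so that the exterior derivatives $d\theta$ and $d\overline{\theta}$ are both invariant under the flow of $X_H$. Concretely, I would first recall (as is already noted in the text immediately preceding the statement) that $R(H)=\overline{R}(K)=0$ gives $L_{X_H}\theta = -R(H)\theta = 0$ and $L_{X_H}\overline{\theta} = -\overline{R}(K)\overline{\theta}=0$; applying $d$ and using that $L_{X_H}$ commutes with $d$ yields $L_{X_H}(d\theta)=0$ and $L_{X_H}(d\overline{\theta})=0$.

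Next I would differentiate the defining relation $d\overline{\theta}=S\lrcorner d\theta$ along $X_H$, using the Leibniz rule for the Lie derivative of the contraction of a $(1,1)$-tensor with a $2$-form:
\begin{equation}
0 = L_{X_H}(d\overline{\theta}) = (L_{X_H}S)\lrcorner d\theta + S\lrcorner(L_{X_H}d\theta) = (L_{X_H}S)\lrcorner d\theta.
\end{equation}
Here one must be slightly careful, because $d\theta$ is degenerate on a contact manifold (it has the one-dimensional kernel spanned by $R$), so $(L_{X_H}S)\lrcorner d\theta = 0$ does not immediately force $L_{X_H}S = 0$ — only that its $d\theta$-contraction vanishes. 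However, the local form \eqref{Scont} of $S$ shows that $S$ takes values in the contact distribution $\ker\theta$ and annihilates $\theta$, and the same structural constraint is preserved by $L_{X_H}S$ (since $L_{X_H}\theta = 0$ implies $L_{X_H}$ maps sections of $\ker\theta$ to sections of $\ker\theta$); on the $2n$-dimensional distribution $\ker\theta$ the form $d\theta$ is nondegenerate, so the restriction of $L_{X_H}S$ to that block must vanish. The components of $S$ along the Reeb direction ($S^z_\beta = p_i S^i_\beta$, $S^z_z = p_i S^i_z$, and the $dz$ slots) are slaved to the distribution components and do not enter $\operatorname{tr}(S^l)$, exactly as in the cosymplectic case of Theorem \ref{te2}. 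Hence $L_{X_H}\operatorname{tr}(S^l)=0$.

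Finally, with $L_{X_H}S = 0$ (at least on the block relevant for traces) in hand, the conclusion follows by the same computation as in Theorem \ref{te1}: for each $l=1,2,\dots$,
\begin{equation}
L_{X_H}\operatorname{tr}(S^l) = \operatorname{tr}(L_{X_H}S^l) = \sum_{i=1}^{l-1}\operatorname{tr}\!\big(S^{i-1}(L_{X_H}S)S^{l-i}\big) = 0,
\end{equation}
so each $\operatorname{tr}(S^l)$ is a constant of motion. The main obstacle, as indicated above, is handling the degeneracy of $d\theta$: one cannot pass from $(L_{X_H}S)\lrcorner d\theta = 0$ to $L_{X_H}S = 0$ naively, and the cleanest fix is either to work in Darboux coordinates with the explicit expression \eqref{Scont} and check the components directly (as was done for Theorem \ref{te2}), or to argue invariantly that $L_{X_H}S$ preserves the same ``horizontal'' structure as $S$ and then use nondegeneracy of $d\theta|_{\ker\theta}$. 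Either way, only the horizontal $S^\alpha_\beta$ block matters for the traces, so the argument goes through.
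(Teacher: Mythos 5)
Your proposal is correct and follows the same route as the paper: use goodness to get $L_{X_{H}}d\theta=L_{X_{H}}d\overline{\theta}=0$, Lie-differentiate $d\overline{\theta}=S\lrcorner d\theta$ to obtain $(L_{X_{H}}S)\lrcorner d\theta=0$, conclude $L_{X_{H}}S=0$, and finish with the trace identity. The one place you go beyond the paper is worthwhile: the paper passes from $(L_{X_{H}}S)\lrcorner d\theta=0$ to $L_{X_{H}}S=0$ in one line, even though $d\theta$ is degenerate on a contact manifold, and your fix is exactly the missing ingredient — from $\theta\circ S=0$ and $L_{X_{H}}\theta=0$ one gets $\theta\circ(L_{X_{H}}S)=0$, and since the contraction identity $d\theta\bigl((L_{X_{H}}S)X,Y\bigr)=0$ holds for \emph{all} $X,Y$ while $d\theta$ is nondegenerate on $\ker\theta$, this forces $(L_{X_{H}}S)X=0$ for every $X$, i.e.\ $L_{X_{H}}S=0$ in full, not merely on a block. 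One caveat on your fallback remark: the claim that the Reeb-direction components of $S$ do not enter $tr(S^{l})$ is not true in the contact case (unlike the cosymplectic one, the $z$-row is nonzero: $S^{z}_{\beta}=p_{i}S^{i}_{\beta}$, $S^{z}_{z}=p_{i}S^{i}_{z}$, so $S^{z}_{z}$ contributes to $tr(S)$ and mixed terms such as $S^{\alpha}_{z}S^{z}_{\beta}$ appear in higher powers); fortunately that remark is dispensable, since your main argument already yields the full $L_{X_{H}}S=0$ and hence $L_{X_{H}}tr(S^{l})=0$ for all $l$.
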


\begin{proof}
\begin{equation}
\begin{split}
L_{X_{H}}d\overline{\theta}=(L_{X_{H}}S)\lrcorner d\theta+S\lrcorner(L_{X_{H}}d\theta)&\Longrightarrow  0=(L_{X_{H}}S)\lrcorner d\theta\\
&\Longrightarrow L_{X_{H}}S=0. 
\end{split}
\end{equation}
Then for $l=1,2,\cdots$, we have
\begin{equation}
L_{X_{H}}tr(S^{l})=tr(L_{X_{H}}S^{l})=\sum_{i=1}^{l-1}tr(S^{i-1}(L_{X_{H}}S)S^{l-i})=0.
\end{equation}
\end{proof}

\section{Cocontact geometry and time-dependent contact Hamiltonian systems}
\label{sec8}

This section together with sections \ref{sec2}, \ref{sec4} and \ref{sec6} completes the notation and language employed in this paper. Here we briefly review the formalism of time-dependent contact Hamiltonian systems introduced in \cite{Cocontact}.

\begin{de}
Let $M$ be a $2n+2$ dimensional smooth manifold. A cocontact structure on $M$ is a couple $(\theta,\eta)$ of 1-forms on $M$ such that $\eta$ is closed and $\eta\wedge\theta\wedge(d\theta)^{n}\neq 0$. If $(\theta,\eta)$ is a cocontact structure on $M$ we say that $(M,\theta,\eta)$ is a cocontact manifold.
\end{de}

Let $(M,\theta,\eta)$ be a cocontact manifold of dimension $2n+2$. Around any point $p\in M$ there exist local coordinates $(t,q^{1},\cdots,q^{n},p_{1},\cdots,p_{n},z)$, called canonical coordinates or Darboux coordinates, such that
\begin{equation}
\theta=dz-p_{i}dq^{i}\hspace{1cm}\textit{and}\hspace{1cm}\eta=dt.
\end{equation}
There exist two distinguished vector fields $R_{z}$ and $R_{t}$ on $M$, called the contact Reeb vector field and the time Reeb vector field, respectively, such that 
\begin{equation}
\left\lbrace \begin{array}{c}
\eta(R_{z})=0 \\ 
\theta(R_{z})=1 \\ 
d\theta(R_{z})=0
\end{array} \right. 
\end{equation}
and
\begin{equation}
\left\lbrace \begin{array}{c}
\eta(R_{t})=1 \\ 
\theta(R_{t})=0 \\ 
d\theta(R_{t})=0
\end{array} \right. .
\end{equation}
In canonical coordinates, we have $R_{z}=\frac{\partial}{\partial z}$ and $R_{t}=\frac{\partial}{\partial t}$.

For each $f\in C^{\infty}(M)$ is assigned a vector field $X_{f}$ on $M$, called the Hamiltonian vector field for $f$, according to
\begin{equation}
X_{f}\lrcorner \theta=-f,\hspace{1cm}X_{f}\lrcorner d\theta=df-R_{z}(f)\theta-R_{t}(f)\eta \hspace{1cm}\textit{and}\hspace{1cm} X_{f}\lrcorner \eta =0.
\end{equation}
In canonical coordinates, we have
\begin{equation}
X_{f}=\frac{\partial f}{\partial p_{i}}\frac{\partial}{\partial q^{i}}-\left( \frac{\partial f}{\partial q^{i}}+p_{i}\frac{\partial f}{\partial z}\right) \frac{\partial}{\partial p_{i}}+\left( p_{i}\frac{\partial f}{\partial p_{i}}-f\right)\frac{\partial}{\partial z}.
\end{equation}
We can observe that the assignment $f\longmapsto X_{f}$ is linear, that is
\begin{equation}
X_{f+\alpha g}=X_{f}+\alpha X_{g},
\end{equation}
$\forall f,g\in C^{\infty}(M)$ and $\forall \alpha \in\mathbb{R}$. Like contact manifolds, cocontact manifolds are Jacobi manifolds; given $f,g \in C^{\infty}(M)$ the Jacobi bracket of $f$ and $g$ is defined by
\begin{equation}
\lbrace f,g\rbrace=X_{g}(f)+fR_{z}(g).
\end{equation}
In canonical coordinates, we have
\begin{equation}
\lbrace f,g\rbrace=\frac{\partial f}{\partial q^{i}}\frac{\partial g}{\partial p_{i}}-\frac{\partial f}{\partial p_{i}}\frac{\partial g}{\partial q^{i}}+\frac{\partial f}{\partial z}\left( p_{i}\frac{\partial g}{\partial p_{i}}-g\right)-\frac{\partial g}{\partial z}\left( p_{i}\frac{\partial f}{\partial p_{i}}-f\right).
\end{equation}

Time-dependent contact Hamiltonian systems are defined under the formalism of cocontact geometry. Given $H\in C^{\infty}(M)$ the dynamic of the Hamiltonian system on $(M,\theta,\eta)$ (the phase space) with Hamiltonian function $H$ is defined by the evolution vector field $E_{H}=X_{H}+R_{t}$.
In canonical coordinates
\begin{equation}
E_{H}=\frac{\partial H}{\partial p_{i}}\frac{\partial}{\partial q^{i}}-\left( \frac{\partial H}{\partial q^{i}}+p_{i}\frac{\partial H}{\partial z}\right) \frac{\partial}{\partial p_{i}}+\left( p_{i}\frac{\partial H}{\partial p_{i}}-H\right)\frac{\partial}{\partial z}+\frac{\partial}{\partial t}.
\end{equation}
The trajectories $\psi(s)=(t(s),q^{1}(s),\cdots,q^{n}(s),p_{1}(s),\cdots,p_{n}(s),z(s))$ of the system are the integral curves of $E_{H}$, they satisfy the equations
\begin{equation}
\dot{q^{i}} =\frac{\partial H}{\partial p_{i}}, \hspace{1cm}
\dot{p_{i}} =-\left( \frac{\partial H}{\partial q^{i}}+p_{i}\frac{\partial H}{\partial z}\right) , \hspace{1cm}
\dot{z}=p_{i}\frac{\partial H}{\partial p_{i}}-H, \hspace{1cm} \dot{t}=1.
\end{equation}
Since $\dot{t}=1$ then $t=s$, which implies that the temporal parameter for the system is $t$, that is, the trajectories of the system are parametrized by $t$
\begin{equation}
\psi(t)=(t,q^{1}(t),\cdots,q^{n}(t),p_{1}(t),\cdots,p_{n}(s),z(t))
\end{equation}
and they obey the dissipative Hamilton equations of motion 
\begin{equation}
\dot{q^{i}} =\frac{\partial H}{\partial p_{i}}, \hspace{1cm}
\dot{p_{i}} =-\left( \frac{\partial H}{\partial q^{i}}+p_{i}\frac{\partial H}{\partial z}\right) , \hspace{1cm}
\dot{z}=p_{i}\frac{\partial H}{\partial p_{i}}-H.
\end{equation}
The evolution of a function $f\in C^{\infty}(M)$ (an observable) along the trajectories of the system is given by
\begin{equation}
\dot{f}= \mathcal{L}_{E_{H}}f=E_{H}f=X_{H}f+R_{t}f=\lbrace f,H\rbrace-fR_{z}H+R_{t}f.
\end{equation}
We say that a function $f\in C^{\infty}(M)$ is a constant of motion of the system if it is constant along the trajectories of the system, that is, $f$ is a constant of motion if $L_{E_{H}}f=0$ ($\lbrace f,H\rbrace-fR_{z}H+R_{t}f=0$).

\section{Canonical and canonoid transformations for time-dependent contact Hamiltonian systems}
\label{sec9}

Finally, in this section we introduce the notions of canonical and canonoid transformations for time-dependent contact Hamiltonian systems and we obtain analogous results to those presented in section \ref{sec7}.

Let us consider a cocontact manifold $(M,\theta,\eta)$ of dimension $2n+2$.
\begin{de}
We say that a diffeomorphism $F:M\longrightarrow M$ is a canonical transformation if $F^{*}\theta=\theta$ and $F^{*}\eta=\eta$.
\end{de}
Let $F:M\longrightarrow M$ be a diffeomorphism. We know that $(F^{*}\theta,F^{*}\eta)$ is a cocontact structure on $M$, let us denote it by $(\overline{\theta},\overline{\eta})$, by $\overline{\lbrace,\rbrace}$ the Jacobi bracket defined by it and by $\overline{R}_{z},\overline{R}_{t}$ the contact and time Reeb vector field respectively for this contact structure. Around any point $p\in M$ there are local coordinates $(T,Q^{1},\cdots,Q^{n},P_{1},\cdots,P_{n},Z)$ such that
\begin{equation}
\overline{\theta}=dZ-P_{i}dQ^{i}\hspace{1cm}\hspace{1cm}\textit{and}\hspace{1cm}\overline{R}=\frac{\partial}{\partial Z} \hspace{1cm}\textit{and}\hspace{1cm}\overline{\eta}=dT.
\end{equation}
Given $f\in C^{\infty}(M)$, let us denote by $\overline{X}_{f}$ the Hamiltonian vector field for $f$ defined by $(\overline{\theta},\overline{\eta})$, then $\overline{\lbrace f,g\rbrace}=\overline{X}_{g}f+f\overline{R}_{z}(g)$.

As in the contact case we have that canonical coordinates preserve the Jacobi bracket and the Hamiltonian form of the equations of motion (we avoid to write the whole details since it is repetitive).

Now, we introduce canonoid transformations.

\begin{de}
We say that the diffeomorphism $F:M\longrightarrow M$ is a canonoid transformation for the time-dependent contact Hamiltonian system $(M,\theta,\eta,H)$ if there exists a function $K\in C^{\infty}(M)$ such that 
\begin{equation}
X_{H}\lrcorner \overline{\theta} = -K, \hspace{1cm} X_{H}\lrcorner d\overline{\theta} =dK-\overline{R}_{z}(K)\overline{\theta}-\overline{R}_{t}(K)\overline{\eta}\hspace{1cm}\textit{and}\hspace{1cm}X_{H}\lrcorner \overline{\eta}=0.
\end{equation}
\end{de}

Let $F$ be a canonoid transformation. We have two time-dependent contact Hamiltonian systems $(M,\theta,\eta,H)$ and $(M,\overline{\theta},\overline{\eta},K)$; the temporal parameter for the first one is $t$ and the temporal parameter for the second one is $T$, then $T=t$ (formally $T=F^{*}t$), so we have that $\overline{\eta}=\eta$ and $\overline{R}_{t}=R_{t}$. The dynamic of both systems is defined by the same evolution vector field $E_{H}$ (so that again we have two Hamiltonian representations for the same mechanical system).

Given $f\in C^{\infty}(M)$ we have that
\begin{equation}
\dot{f}=\lbrace f,H\rbrace-fR_{z}H+R_{t}f=X_{H}f+fR_{z}H-fR_{z}H+R_{t}f=\overline{\lbrace f,K\rbrace}-f\overline{R}_{z}K+R_{t}f,
\end{equation}
then $\lbrace f,H\rbrace=\overline{\lbrace f,K\rbrace}-f\overline{R}_{z}K+fR_{z}H$ (in general, canonoid transformations do not preserve the Poisson bracket).
The equations of motion in the coordinates $(t,Q^{1},\cdots,Q^{n},P_{1},\cdots,P_{n},Z)$ take the form
\begin{equation}
\begin{split}
\dot{Q^{i}} &= \overline{\lbrace Q^{i},K\rbrace}-Q^{i}\overline{R}_{z}K+R_{t}Q^{i}=\frac{\partial K}{\partial P_{i}},\\
\dot{P_{i}} &=-\left( \frac{\partial K}{\partial Q^{i}}+P_{i}\frac{\partial K}{\partial Z}\right) ,\\
\dot{Z} &=P_{i}\frac{\partial K}{\partial P_{i}}-K.
\end{split}
\end{equation}
So that canonoid transformations preserve the Hamiltonian form of the equations of motion.

As in the cosymplectic case we have that the geometric structures $\theta$ and $\overline{\theta}$ are not invariant under the flow of $E_{H}$, namely
\begin{equation}
L_{E_{H}}\theta=E_{H}\lrcorner d\theta+d(E_{H}\lrcorner \theta)=(X_{H}+R_{t})\lrcorner d\theta+d((X_{H}+R_{t})\lrcorner \theta)=dH-R_{z}(H)\theta-R_{t}(H)\eta+d(-H)=-R_{z}(H)\theta-R_{t}(H)\eta
\end{equation}
and analogously we see that 
\begin{equation}
L_{E_{H}}\overline{\theta}=-\overline{R}_{z}(K)\overline{\theta}-R_{t}(K)\eta.
\end{equation}
Following the development in the cosymplectic case, we define the (1,1)-tensor field $S$ on $M$ by
\begin{equation}
d\overline{\theta}=S\lrcorner d\theta,\hspace{1cm}S\lrcorner\theta=0 \hspace{1cm}\textit{and}\hspace{1cm}S\lrcorner\eta=0.
\end{equation}
 Let us see $S$ in canonical coordinates $(t,q^{1},\cdots,q^{n},p_{1},\cdots,p_{n},z)$. Again we write $(x^{1},\cdots,x^{2n})=(q^{1},\cdots,q^{n},p_{1},\cdots,p_{n})$ then we have
\begin{equation}
d\theta=dq^{i}\wedge dp_{i}=\frac{1}{2}\epsilon_{\mu\nu}dx^{\mu}\wedge dx^{\nu}
\end{equation}
and 
\begin{equation}
d\overline{\theta}=dQ^{i}\wedge dP_{i}=\frac{1}{2}[x^{\mu},x^{\nu}]dx^{\mu}\wedge dx^{\nu}+[x^{\mu},z]dx^{\mu}\wedge dz+[x^{\mu},t]dx^{\mu}\wedge dt+[z,t]dz\wedge dt.
\end{equation}
The local form of the tensor $S$ is
\begin{equation}
S=S^{\alpha}_{\beta}\frac{\partial}{\partial x^{\alpha}}\otimes dx^{\beta}+S^{\alpha}_{z}\frac{\partial}{\partial x^{\alpha}}\otimes dz+S^{z}_{\beta}\frac{\partial}{\partial z}\otimes dx^{\beta}+S^{z}_{z}\frac{\partial}{\partial z}\otimes dz+S^{\alpha}_{t}\frac{\partial}{\partial x^{\alpha}}\otimes dt+S^{t}_{\beta}\frac{\partial}{\partial t}\otimes dx^{\beta}+S^{t}_{t}\frac{\partial}{\partial t}\otimes dt+S^{z}_{t}\frac{\partial}{\partial z}\otimes dt+S^{t}_{z}\frac{\partial}{\partial t}\otimes dz.
\end{equation}
Since $S\lrcorner\eta=0$ we have $S^{t}_{\beta}=S^{t}_{t}=S^{t}_{z}=0$ and since $S\lrcorner\theta=0$ then $S^{z}_{\beta}=p_{i}S^{i}_{\beta}$, $S^{z}_{z}=p_{i}S^{i}_{z}$ and $S^{z}_{t}=p_{i}S^{i}_{t}$.
By the same calculus presented in section \ref{sec5} we have that
\begin{equation}
S^{\alpha}_{\beta}=\epsilon^{\lambda\alpha}[x^{\beta},x^{\lambda}],\hspace{1cm}S^{\alpha}_{z}=\epsilon^{\lambda\alpha}[z,x^{\lambda}]\hspace{1cm}\textit{and}\hspace{1cm}S^{\alpha}_{t}=\epsilon^{\lambda \alpha}[t,x^{\lambda}];
\end{equation}
so we have that the expression of $S$ in the canonical coordinates $(t,x^{1},\cdots,x^{2n},z)=(t,q^{1},\cdots,q^{n},p_{1},\cdots,p_{n},z)$ is given by
\begin{equation}
\label{Scoco}
\begin{split}
S &=\epsilon^{\lambda\alpha}[x^{\beta},x^{\lambda}]\frac{\partial}{\partial x^{\alpha}}\otimes dx^{\beta}+\epsilon^{\lambda\alpha}[z,x^{\lambda}]\frac{\partial}{\partial x^{\alpha}}\otimes dz+p_{i}\epsilon^{\lambda i}[x^{\beta},x^{\lambda}]\frac{\partial}{\partial z}\otimes dx^{\beta}+p_{i}\epsilon^{\lambda i}[z,x^{\lambda}]\frac{\partial}{\partial z}\otimes dz\\
 &+\epsilon^{\lambda \alpha}[t,x^{\lambda}]\frac{\partial}{\partial x^{\alpha}}\otimes dt+p_{i}\epsilon^{\lambda i}[t,x^{\lambda}]\frac{\partial}{\partial z}\otimes dt.
\end{split}
\end{equation}

Let us remember from section \ref{sec7} that a contact Hamiltonian system with Hamiltonian function $f_{0}$ and phase space of dimension $2n+1$ is said to be completely integrable if there exists $n+1$ independent constants of motion $f_{0},f_{1},\cdots,f_{n}$ in involution; which means that for integrability it is essential that the Hamiltonian function is a constant of motion. Thus, in analogy with the good contact time-independent  Hamiltonian system presented previously, we propose the following definition: 
\begin{de}
We say that the time-dependent contact Hamiltonian system $(M,\theta,\eta,H)$ is good if $H$ is constant along the flow of the contact and Reeb vector fields $R_{z}$ and $R_{t}$, respectively, or equivalently $H$ is a constant of motion. 
\end{de}
Let us suppose that $(M,\theta,\eta,H)$ and $(M,\overline{\theta},\eta,K)$ are good contact Hamiltonian systems, that is, $R_{z}(H)=\overline{R}_{z}(K)=R_{t}(H)=R_{t}(K)=0$. Then the structures $\theta$ and $\overline{\theta}$ are invariant under the flow of $E_{H}$, which implies that $L_{E_{H}}d\theta=L_{E_{H}}d\overline{\theta}=0$, so by following the same development presented in section \ref{sec7} we have that the traces of the powers of $S$ are constants of motion.

\begin{te}\label{te4}
If $F:M\longrightarrow M$ is a canonoid transformation for the good time-dependent contact Hamiltonian system $(M,\theta,\eta,H)$, then the traces of the powers of the (1,1)-tensor field $S$ defined above in (\ref{Scoco}) are constants of motion.
\end{te}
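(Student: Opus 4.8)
The plan is to mirror, essentially verbatim, the argument used in Theorem \ref{te3} for the good contact case, since the goodness hypothesis $R_{z}(H)=R_{t}(H)=0$ forces the same invariance phenomena. First I would record that for a good time-dependent contact Hamiltonian system the structures $\theta$ and $\overline{\theta}$ are invariant under the flow of the evolution vector field $E_{H}$: from the displayed identities $L_{E_{H}}\theta=-R_{z}(H)\theta-R_{t}(H)\eta$ and $L_{E_{H}}\overline{\theta}=-\overline{R}_{z}(K)\overline{\theta}-R_{t}(K)\eta$, the goodness of both representations makes the right-hand sides vanish, hence $L_{E_{H}}\theta=L_{E_{H}}\overline{\theta}=0$ and therefore $L_{E_{H}}d\theta=d(L_{E_{H}}\theta)=0$ and likewise $L_{E_{H}}d\overline{\theta}=0$.

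Next I would apply the Leibniz rule for the Lie derivative to the defining relation $d\overline{\theta}=S\lrcorner d\theta$, exactly as in Theorems \ref{te1} and \ref{te3}: differentiating along $E_{H}$ gives $0=L_{E_{H}}d\overline{\theta}=(L_{E_{H}}S)\lrcorner d\theta+S\lrcorner(L_{E_{H}}d\theta)=(L_{E_{H}}S)\lrcorner d\theta$. Since $d\theta$ is nondegenerate on the contact distribution and $S\lrcorner\theta=S\lrcorner\eta=0$ (so $S$ is effectively supported on the $2n$-dimensional symplectic directions where $d\theta$ restricts to a symplectic form), the contraction $(L_{E_{H}}S)\lrcorner d\theta=0$ forces $L_{E_{H}}S=0$; this is the analogue of the step $(L_{X_{H}}S)\lrcorner\omega=0\Rightarrow L_{X_{H}}S=0$ used earlier. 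One should check here that the degenerate directions do not obstruct the conclusion, but because $L_{E_{H}}S$ inherits the same annihilation properties with respect to $\theta$ and $\eta$ (as $L_{E_{H}}\theta=L_{E_{H}}\eta=0$), the argument restricts cleanly to the nondegenerate block.

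Finally, from $L_{E_{H}}S=0$ the conclusion about the traces is immediate by the same computation as in the previous theorems: for each $l=1,2,\dots$,
\begin{equation}
L_{E_{H}}\,tr(S^{l})=tr(L_{E_{H}}S^{l})=\sum_{i=1}^{l-1}tr\big(S^{i-1}(L_{E_{H}}S)S^{l-i}\big)=0,
\end{equation}
so each $tr(S^{l})$ is a constant of motion, i.e.\ constant along the integral curves of $E_{H}$. I expect the only genuine subtlety — and hence the main point to argue carefully rather than merely assert — to be the implication $(L_{E_{H}}S)\lrcorner d\theta=0\Rightarrow L_{E_{H}}S=0$ in the presence of the two degenerate directions $\partial/\partial z$ and $\partial/\partial t$; in the local Darboux picture of \eqref{Scoco} this reduces to the invertibility of the symplectic matrix $\epsilon$ on the $x$-block, so it is routine, but it deserves an explicit remark that the $dz$ and $dt$ components of $S$ play no role in $tr(S^{l})$ and that the induced equations on the $x$-block are exactly the symplectic ones. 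Everything else is a direct transcription of the proof of Theorem \ref{te3}.
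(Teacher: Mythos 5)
Your argument is exactly the paper's proof of Theorem \ref{te4}: goodness of both representations gives $L_{E_{H}}d\theta=L_{E_{H}}d\overline{\theta}=0$, the Leibniz rule applied to $d\overline{\theta}=S\lrcorner d\theta$ yields $(L_{E_{H}}S)\lrcorner d\theta=0$, hence $L_{E_{H}}S=0$ and the trace computation finishes it. Your extra remark justifying $(L_{E_{H}}S)\lrcorner d\theta=0\Rightarrow L_{E_{H}}S=0$ despite the degeneracy of $d\theta$ (namely that $L_{E_{H}}S$ still annihilates $\theta$ and $\eta$, so its image lies in the distribution where $d\theta$ is nondegenerate) is a correct and welcome elaboration of a step the paper merely asserts, not a different approach.
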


\begin{proof}
\begin{equation}
\begin{split}
L_{E_{H}}d\overline{\theta}=(L_{E_{H}}S)\lrcorner d\theta+S\lrcorner(L_{E_{H}}d\theta)&\Longrightarrow  0=(L_{E_{H}}S)\lrcorner d\theta\\
&\Longrightarrow L_{E_{H}}S=0. 
\end{split}
\end{equation}
Then for $l=1,2,\cdots$, we have
\begin{equation}
L_{E_{H}}tr(S^{l})=tr(L_{E_{H}}S^{l})=\sum_{i=1}^{l-1}tr(S^{i-1}(L_{E_{H}}S)S^{l-i})=0.
\end{equation}
\end{proof}

\section{Conclusions}

In this study we have presented canonoid transformations from a geometric view point. These transformations play a significant role in the theory of Hamiltonian systems. In particular, for time-independent and time-dependent Hamiltonian systems we have presented the geometrically-based concept of canonoid transformation under the formalisms of symplectic and cosymplectic geometry, respectively. In the case of (time-independent) contact Hamiltonian systems and (time-dependent) cocontact Hamiltonian systems, the notion of canonoid transformation was depicted in a natural way following the approach used on the symplectic and cosymplectic cases. In all these cases, we have shown not only the existence of constants of motion associated with canonoid transformations, but also a constructive process to find them was described. We hope this work will be helpful in shedding light on the lesser-known aspects and few explored theories of contact and cocontact Hamiltonian systems. In future works, we plan to explore the analogue of canonoid transformations in quantum mechanics.

\section*{Acknowledgments}

The author R. Azuaje wishes to thank CONACYT (México) for financial support through a postdoctoral fellowship in the program Estancias Posdoctorales por México 2022.

\end{document}